\newtheorem{example}{Example}
\newtheorem{theorem}{Theorem}
\newtheorem{proposition}{Proposition}
\newtheorem{definition}{Definition}
\newtheorem{lemma}{Lemma}
\newtheorem{corollary}{Corollary}
\title{Finite two-dimensional proof systems for non-finitely axiomatizable logics}
\author{ \href{https://orcid.org/0000-0003-3240-386X}{\includegraphics[scale=0.06]{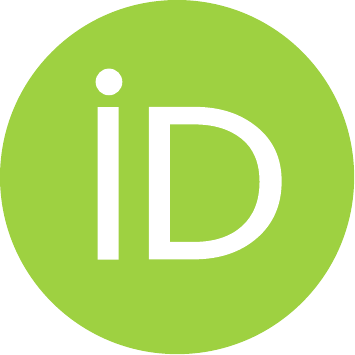}\hspace{1mm}Vitor Greati} \\
	Programa de Pós-graduação em Sistemas e Computação,\\
Departamento de Informática e Matemática Aplicada,\\
Universidade Federal do Rio Grande do Norte,\\ 
Natal, Brazil\\\\ 
Bernoulli Institute, University of Groningen, The Netherlands\\
	\texttt{vitor.greati.017@ufrn.edu.br} \\
	\And
	\href{https://orcid.org/0000-0003-2601-8164}{\includegraphics[scale=0.06]{orcid.pdf}\hspace{1mm}João Marcos} \\
	Departamento de Informática e Matemática Aplicada,\\
	Universidade Federal do Rio Grande do Norte,\\ 
	Natal, Brazil\\
	\texttt{jmarcos@dimap.ufrn.br} \\
}
\begin{document}
\maketitle

\begin{abstract}
The characterizing properties of a proof-theoretical presentation of a given logic may hang on the choice of proof formalism, on the shape of the logical rules and of the sequents manipulated by a given proof system, on the underlying notion of consequence, and even on the expressiveness of its linguistic resources and on the logical framework into which it is embedded.  Standard (one-dimensional) logics determined by (non-deterministic) logical matrices are known to be axiomatizable by analytic and possibly finite proof systems as soon as they turn out to satisfy a certain constraint of sufficient expressiveness.  In this paper we introduce a recipe for cooking up a two-dimensional logical matrix (or \TheB-matrix) by the combination of two (possibly partial) non-deterministic logical matrices.  We will show that such a combination may result in \mbox{\TheB-matrices} satisfying the property of sufficient expressiveness, even when the input matrices are not sufficiently expressive in isolation, and we will use this result to show that one-dimensional logics that are not finitely axiomatizable may inhabit finitely axiomatizable two-dimensional logics, becoming, thus, finitely axiomatizable by the addition of an extra dimension.  
We will illustrate the said construction using a well-known logic of formal inconsistency called \mciName{}.  We will first prove that this logic is not finitely axiomatizable by a one-dimensional (generalized) Hilbert-style system.  Then, taking advantage of a known 5-valued non-deterministic logical matrix for this logic, we will combine it with another one, conveniently chosen so as to give rise to a \TheB-matrix that is axiomatized by a two-dimensional Hilbert-style system that is both finite and analytic.
\end{abstract}

\keywords{Hilbert-style proof systems \and finite axiomatizability \and consequence~relations \and non-deterministic semantics \and paraconsistency}

\section{Introduction}
\vspace{-2mm}
A {logic} is commonly defined nowadays
as a relation that connects collections of {formulas} 
from a {formal language} and satisfies some closure properties. 
The established connections are called {consecutions} and
each of them has two parts, an {antecedent} and a {succedent}, the latter often being said to `follow from' (or to be a consequence of) the former.
A~logic may be manufactured in a number of ways, in particular
as being induced by the set of {derivations} 
justified by the {rules of inference} of a given
{proof system}. There are different kinds of proof systems,
the differences between them residing mainly in the shapes of their rules of inference
and on the way derivations are built.
We will be interested here in {Hilbert-style
proof systems} (`{H-systems}', for short), whose rules of inference have the same
shape of the consecutions of the logic they canonically induce
and whose associated derivations consist in expanding a given antecedent
by applications of rules of inference until the
desired succedent is produced.
A remarkable property of an H-system is that the logic induced by it is the least logic containing the rules of inference of the system;
in the words of~\cite{wojcicki1970}, the system constitutes a `{logical basis}' for the said logic.

Conventional H-systems, which we here dub `\SetFmla{} H-systems', do not allow for more than one formula in the succedents of the consecutions that they manipulate.
Since 
\cite{ss1978},
however, 
we have learned that the simple elimination of this restriction on H-systems ---that is, allowing for sets of formulas rather than single formulas in the succedents---
brings numerous advantages, among which we mention: \emph{modularity} (correspondence between rules of inference and properties satisfied by a semantical structure),
\emph{analyticity} (control over the resources demanded to produce a derivation), and the automatic generation of analytic proof systems for a wide class of logics
specified by sufficiently expressive non-deterministics semantics, with an associated straightforward proof-search procedure~\cite{marcelino19syn,marcelino19woll}.
Such generalized systems, here dubbed `\SetSet{} H-systems', 
induce logics whose
consecutions involve succedents consisting in a collection of formulas, intuitively understood as `alternative conclusions'.

An H-system $\mathcal{H}$ is said to be an \emph{axiomatization}
for a given logic $\mathcal{L}$ when the logic induced by $\mathcal{H}$
coincides with $\mathcal{L}$.
A desirable property for an axiomatization is \emph{finiteness},
namely the property of 
consisting on a finite collection of schematic axioms and rules of inference.
A logic having a finite axiomatization
is said to be `{finitely based}'.
In the literature, one may find examples of logics having a quite
simple, finite semantic presentation, being, in contrast,
not finitely based in terms of \SetFmla{} H-systems~\cite{pala1994}.
These very logics, however, when seen as companions of
logics with multiple formulas in the succedent,
turn out to be finitely based in terms of \SetSet{} H-systems~\cite{marcelino19syn}.
In other words, by updating the underlying proof-theoretical and the logical formalisms, we are able to obtain a finite axiomatization
for logics which in a more restricted setting could not be said to be finitely based.
We may compare the above mentioned movement to the common mathematical practice of 
adding dimensions in order to provide better insight on some phenomenon. 
%
A well-known example of that is given by the Fundamental Theorem of Algebra, which provides an elegant solution to the problem of determining the roots of polynomials over a single variable, demanding only that real coefficients should be replaced by complex coefficients. 
Another example, 
from Machine Learning, is the `kernel trick' employed in support vector machines:
by increasing the dimensionality of the input space, the transformed data points become more easily separable by hyperplanes, making it possible to achieve better results in classification tasks.%

\vspace{-.5mm}
It is worth noting that there are logics that fail to be finitely based
in terms of \SetSet{} H-systems.
An example of a logic designed with the sole purpose of illustrating this possibility was provided in~\cite{marcelino19syn}.
One of the goals of the present work is to 
show that an important logic from the
literature of logics of formal inconsistency (LFIs)
called \mciName{} is also an example of this phenomenon.
This logic results from adding infinitely-many axiom
schemas to the logic \mbcName{}, 
a logic that is 
obtained by extending positive
classical logic with 
two axiom schemas.
Incidentally, along the proof of this result,
we will show that
\mciName{} is the limit of a strictly increasing
chain of LFIs
extending \mbcName{} (comparable to the case of 
$\mathcal{C}_{\mathrm{Lim}}$ 
in da Costa's hierarchy of increasingly weaker paraconsistent calculi~\cite{carniellimarcos1999}).
A natural question, then, is whether we can 
enrich our technology, in the same vein, in order to
provide finite axiomatizations for 
all these logics.
%
We answer that in the affirmative by means of the two-dimensional
frameworks
developed in~\cite{blasiomarcos2017,greati2021}.
Logics, in this case, connect pairs of collections of formulas. 
A consecution, in this setting, may be read as involving formulas that are accepted and those that are not, as well as formulas that are rejected and those that are not.
`Acceptance' and `rejection'
are seen, thus, as two orthogonal dimensions that may interact, making it possible, thus, to express more complex consecutions than those expressible in one-dimensional logics.
Two-dimensional H-systems, which we call `{\SetTSetT{} H-systems}',
generalize \SetSet{} H-systems so as to manipulate
pairs of collections of
formulas, canonically
inducing two-dimensional logics and constituting logical bases
for them.
Another goal of the present work is, therefore, to show how to obtain a two-dimensional
logic 
inhabited by a (possibly not finitely based)
one-dimensional logic of interest. More than that,
the logic we obtain
will be finitely axiomatizable in
terms of a \SetTSetT{} analytic H-system.
The only requirements is that the one-dimensional logic
of interest
must have an associated semantics in terms of
a finite non-deterministic logical matrix
and that this matrix can be 
combined with another one through a novel procedure that we 
will introduce, resulting in
a two-dimensional non-deterministic matrix (a \TheB-matrix~\cite{blasio20171})
satisfying a certain condition of sufficient expressiveness~\cite{greati2021}.
An application of this approach will be provided here in order to produce the first finite and analytic axiomatization of~\mciName{}.

%

\vspace{-.5mm}
The paper is organized as follows: Section~\ref{sec:prelims} introduces basic terminology and definitions regarding algebras and languages. Section~\ref{sec:one-dim} presents the notions of one-dimensional logics and \SetSet{} H-systems. Section~\ref{sec:mci-non-finit} proves that \mciName{} is not finitely axiomatizable by one-dimensional H-systems. Section~\ref{sec:two-dim} introduces 
two-dimensional logics and H-systems, and describes the approach to extending a logical matrix to a \TheB-matrix with the goal of finding a finite two-dimensional axiomatization for the logic associated with the former. 
Section~\ref{sec:mci-two-d} presents a two-dimensional finite analytic H-system for \mciName{}. In the final remarks, we highlight some byproducts of our present approach and some features of the resulting proof systems, in addition to pointing to some directions for further research.%
\footnote{%
Detailed proofs of some results may be found 
in accompanying appendices.%
}

\section{Preliminaries}
\vspace{-2mm}
\label{sec:prelims}
A \emph{propositional signature}
is a family
$\SigA \SymbDef \{\SigArity{\Sigma}{k}\}_{\ArityA \in \NatSet}$,
where each $\SigArity{\Sigma}{\ArityA}$ is
a collection of $\ArityA$-ary \emph{connectives}.
We say that $\SigA$ \emph{is finite} when its base set $\bigcup_{k\in\NatSet}\SigArity{\SigA}{k}$ is finite.
%
A \DefEmph{non-deterministic algebra over $\SigA$}, or simply \DefEmph{$\SigA$-nd-algebra}, is a structure
$\AlgA \SymbDef \Struct{\ValSetA, \AlgInterp{\FunPlaceholder}{\AlgA}}$,
such that $\ValSetA$ is a non-empty collection of values
called the \DefEmph{carrier} of $\AlgA$,
and, for each $\ArityA \in \NatSet$ and $\ConA \in \SigArity{\SigA}{\ArityA}$,
the multifunction
$\AlgInterp{\ConA}{\AlgA} : \ValSetA^\ArityA \to \PowerSet{\ValSetA}$
is the \DefEmph{interpretation of~$\ConA$ in~$\AlgA$}.
When $\SigA$ and $\ValSetA$ are finite, we say that $\AlgA$ is \DefEmph{finite}.
When the range of all interpretations of $\AlgA$ contains only
singletons, $\AlgA$ is said to be a \DefEmph{deterministic algebra over $\SigA$},
or simply a \DefEmph{$\SigA$-algebra}, meeting the usual definition
from Universal Algebra~\cite{burris1981}.
When $\EmptySet$ is not in the range
of each $\AlgInterp{\ConA}{\AlgA}$,
$\AlgA$ is said to be \DefEmph{total}.
Given a $\SigA$-algebra $\AlgA$ and a $\ConA \in \SigArity{\SigA}{1}$, we
let $\AlgInterp{\ConA}{\AlgA}^0(\ValA) \SymbDef \ValA$
and $\AlgInterp{\ConA}{\AlgA}^{i+1}(\ValA) \SymbDef \AlgInterp{\ConA}{\AlgA}(\AlgInterp{\ConA}{\AlgA}^i(\ValA))$.
A mapping $\ValuationA : \ValSetA \to \ValSetB$
is a \DefEmph{homomorphism} from $\AlgA$ to $\AlgB$ when,
for all $\ArityA \in \NatSet$, $\ConA \in \SigArity{\SigA}{\ArityA}$
and $\ValA_1,\ldots,\ValA_k \in \ValSetA$, we have
    $\ValAssignA[\AlgInterp{\ConA}{\AlgA}(\ValA_1,\ldots,\ValA_k)] \subseteq \AlgInterp{\ConA}{\AlgB}(\ValAssignA(\ValA_1),\ldots,\ValAssignA(\ValA_k))$.
The set of all homomorphisms from $\AlgA$ to $\AlgB$ is
denoted by $\HomSet{\SigA}{\AlgA}{\AlgB}$.
When $\AlgB = \AlgA$, we write $\EndSet{\SigA}{\AlgA}$, rather than $\HomSet{\SigA}{\AlgA}{\AlgA}$, for
the set of \DefEmph{endomorphisms on} $\AlgA$.

Let $\PropSetA$ be a denumerable collection of
$\DefEmph{propositional variables}$
and $\SigA$ be a propositional signature.
The absolutely free $\SigA$-algebra freely generated by
$\PropSetA$ is denoted by $\LangAlg{\SigA}{\PropSetA}$
and called the \DefEmph{$\SigA$-language generated by $\PropSetA$}.
The elements of $\LangSet{\SigA}{\PropSetA}$ are called
\DefEmph{$\SigA$-formulas}, and those among them that are
not propositional variables are called $\SigA$-\DefEmph{compounds}.
Given $\FmSetA \subseteq \LangSet{\SigA}{\PropSetA}$,
we denote by $\FmSetCompl{\FmSetA}$ the set $\SetDiff{\LangSet{\SigA}{\PropSetA}}{\FmSetA}$.
The homomorphisms from $\LangAlg{\SigA}{\PropSetA}$ to $\AlgA$ are called \DefEmph{valuations on} $\AlgA$, 
and we denote by $\ValSet{\SigA}{\AlgA}$ the collection thereof.
Additionally, endomorphisms on $\LangAlg{\SigA}{\PropSetA}$ are dubbed
\DefEmph{$\SigA$-substitutions}, and we let $\SubsSet{\SigA}{\PropSetA} \SymbDef \EndSet{\SigA}{\LangAlg{\SigA}{\PropSetA}}$; 
when there is no risk of confusion, we may omit the superscript from this notation.
 
\sloppy 
Given $\FmA \in \LangSet{\SigA}{\PropSetA}$,
let 
$\PropVars{\FmA}$ be the set of propositional
variables occurring in $\FmA$.
If $\PropVars{\FmA} = \{\PropA_1,\ldots,\PropA_k\}$,
we say that $\FmA$ is $\ArityA$-ary (\emph{unary}, for $k=1$; \emph{binary}, for $k=2$) and 
let
$\AlgInterp{\FmA}{\AlgA} : \ValSetA^\ArityA \to \PowerSet{\ValSetA}$
be \DefEmph{the $\ArityA$-ary multifunction on $\AlgA$ induced by~$\FmA$},
where, for all $\ValA_1,\ldots,\ValA_k \in \ValSetA$, we have
$\AlgInterp{\FmA}{\AlgA}(\ValA_1,\ldots,\ValA_k) \SymbDef
 \{\ValuationA(\FmA) \mid \ValuationA \in \ValSet{\SigA}{\AlgA} \text{ and } \ValuationA(\PropA_i) = \ValA_i, \text{ for } 1 \leq i \leq \ArityA\}$.
Moreover, given $\FmB_1,\ldots,\FmB_k \in \LangSet{\SigA}{\PropSetA}$, we write $\FmA(\FmB_1,\ldots,\FmB_k)$
for the $\SigA$-formula $\AlgInterp{\FmA}{\LangAlg{\SigA}{\PropSetA}}(\FmB_1,\ldots,\FmB_k)$,
and, where $\FmSetA\subseteq\LangSet{\SigA}{\PropSetA}$ is a set of
$\ArityA$-ary $\SigA$-formulas, we let $\FmSetA(\FmB_1,\ldots,\FmB_k) \SymbDef \{\FmA(\FmB_1,\ldots,\FmB_k) \mid \FmA \in \FmSetA\}$. 
%
Given $\FmA \in \LangSet{\SigA}{\PropSetA}$,
by $\Subformulas{\FmA}$ we refer to the set of \DefEmph{subformulas of} $\FmA$. 
Where $\BAnaFormA$ is a unary $\SigA$-formula, we define the set $\GenSubfmlas{\BAnaFormA}{\FmA}$ 
as $\{\SubstA(\BAnaFormA)\mid \SubstA~:~\PropSetA \to \Subformulas{\FmA}\}$.  Given a set $\BAnaSetA \supseteq \Set{\PropA}$ of unary $\SigA$-formulas, we set $\GenSubfmlas{\BAnaSetA}{\FmA} \SymbDef \bigcup_{\BAnaFormA\in\BAnaSetA}\GenSubfmlas{\BAnaFormA}{\FmA}$.
For example, if $\BAnaSetA = \Set{\PropA,\mciNeg\PropA}$, we will have $\GenSubfmlas{\BAnaSetA}{\mciNeg(\PropB\lor\PropC)} = \Set{\PropB,\PropC,\PropB\lor\PropC,\mciNeg(\PropB\lor\PropC)}\cup\Set{ \mciNeg\PropB,\mciNeg\PropC,\mciNeg(\PropB\lor\PropC),\mciNeg\mciNeg(\PropB\lor\PropC)}$. Such generalized notion of 
subformulas will be used in the next section 
to provide a more generous proof-theoretical concept of \emph{analyticity}.

\section{One-dimensional consequence relations}
\label{sec:one-dim}
A \DefEmph{\SetSet{} statement} (or \DefEmph{sequent}) is a pair $\Pair{\FmSetA}{\FmSetB} \in \PowerSet{\LangSet{\Sigma}{\PropSetA}} \times \PowerSet{\LangSet{\Sigma}{\PropSetA}}$, 
where $\FmSetA$ is dubbed the \emph{antecedent} and $\FmSetB$ the \emph{succedent}.  A \DefEmph{one-dimensional consequence relation on $\LangSet{\SigA}{\PropSetA}$} is a collection~$\SetSetCR{}$ of \SetSet{} statements
satisfying, for all $\FmSetA,\FmSetB,\FmSetA^\prime,\FmSetB^\prime \subseteq \LangSet{\SigA}{\PropSetA}$,
\begin{description}[labelindent=0.5cm, labelwidth=1cm]
    \item[\namedlabel{prop:CRSSPropO}{\CRSSPropO}] if $\FmSetA \cap \FmSetB \neq \EmptySet$, then $\FmSetA \SetSetCR{} \FmSetB$
    \item[\namedlabel{prop:CRSSPropD}{\CRSSPropD}] if $\FmSetA \SetSetCR{} \FmSetB$, then
    $\FmSetA\cup\FmSetA^\prime \SetSetCR{}{} \FmSetB\cup\FmSetB^\prime$
    \item[\namedlabel{prop:CRSSPropC}{\CRSSPropC}] if
    $\CutSetSS\cup\FmSetA \SetSetCR{} \FmSetB\cup\FmSetCompl{\CutSetSS}$
    for all $\CutSetSS\subseteq\LangSet{\SigA}{\PropSetA}$,
    then $\FmSetA \SetSetCR{} \FmSetB$
\end{description}
Properties \ref{prop:CRSSPropO}, \ref{prop:CRSSPropD}
and \ref{prop:CRSSPropC} are called
\DefEmph{overlap}, \DefEmph{dilution}
and \DefEmph{cut}, respectively. 
The relation $\SetSetCR{}$ is called \DefEmph{substitution-invariant} when it satisfies, for every $\SubstA \in \SubsSet{\SigA}{}$, 
\begin{description}[labelindent=0.5cm, labelwidth=1cm]
    \item[\namedlabel{prop:CRSSPropSS}{\CRSSPropSS}]
    if $\FmSetA \SetSetCR{} \FmSetB$,
    then $\sigma[\FmSetA] \SetSetCR{} \sigma[\FmSetB]$
\end{description}
and it is called \DefEmph{finitary} when it satisfies
\begin{description}[labelindent=0.5cm, labelwidth=1cm]
    \item[\namedlabel{prop:CRSSPropF}{\CRSSPropF}] 
    if $\FmSetA \SetSetCR{} \FmSetB$,
    then 
    $\FinSet{\FmSetA} \SetSetCR{} \FinSet{\FmSetB}$
    for some finite
    $\FinSet{\FmSetA} \subseteq \FmSetA$ and
    $\FinSet{\FmSetB} \subseteq \FmSetB$
\end{description}
One-dimensional consequence
relations will also be referred
to as \DefEmph{one-dimen\-sion\-al logics}.
Substitution-invariant finitary one-dimensional logics
will be called \DefEmph{standard}.
We will denote by $\nSetSetRel{}{}$ the complement
    of $\SetSetRel{}{}$, called the \DefEmph{compatibility relation associated with} $\SetSetRel{}{}$~\cite{blasio2021}.
    
    A \DefEmph{\SetFmla{} statement} is a sequent having a single formula as consequent.
    When we restrict standard consequence relations to collections of \SetFmla{} statements, we define the so-called (substitution-invariant finitary) \DefEmph{Tarskian consequence relations}.
Every one-dimensional consequence relation
    $\SetSetRel{}{}$ determines a Tarskian consequence relation $\SetFmlaRel{\SetSetRel{}{}}{} \;\subseteq\PowerSet{\LangSet{\SigA}{\PropSetA}} \times \LangSet{\SigA}{\PropSetA}$,
    dubbed \DefEmph{the \SetFmla{} Tarskian companion of $\SetSetRel{}{}$},
    such that, for all $\FmSetA \cup \{\FmB\} \subseteq \LangSet{\SigA}{\PropSetA}$,
    $\FmSetA \SetFmlaRel{\SetSetRel{}{}}{} \FmB$ if, and only if,
    $\FmSetA \SetSetRel{}{} \{\FmB\}$.
%
    It is well-known that the 
    collection of all
    Tarskian consequence relations
    over a fixed language
    constitutes a complete lattice under 
    set-theoretical inclusion~\cite{wojcicki1988}. Given
    a set~$C$ of such
    relations, we will denote by $\Supremum C$
    its supremum in the latter lattice.
    
We present in what follows two ways of obtaining one-dimensional consequence relations:
one semantical, 
via 
non-deterministic logical matrices~\cite{avron2011},
and the other proof-theoretical, 
via \SetSet{} Hilbert-style systems~\cite{ss1978,marcelino19syn}.
    
A \DefEmph{non-deterministic $\SigA$-matrix}, or simply \DefEmph{$\SigA$-nd-matrix}, 
is a structure $\MatA \SymbDef \Struct{\AlgA,\DesSetA}$, 
where $\AlgA$ is a $\SigA$-nd-algebra,
whose carrier is the set of \DefEmph{truth-values},
and $\DesSetA \subseteq \ValSetA$ is the set of \DefEmph{designated truth-values}. 
Such structures are also known in the literature
as `PNmatrices'~\cite{baaz2013};
they generalize the so-called `Nmatrices'~\cite{avron2005structures}, which 
are $\SigA$-nd-matrices with the restriction that $\AlgA$ must be total.
From now on, whenever $\ValSubsetA \subseteq \ValSetA$, we denote
$\SetDiff{\ValSetA}{\ValSubsetA}$ by $\ValueSetComp{\ValSubsetA}$.
In case $\AlgA$ is deterministic, we simply say that $\MatA$
is a \DefEmph{$\SigA$-matrix}.
Also, $\MatA$ is said to be \DefEmph{finite} when $\AlgA$ is finite. 
Every $\SigA$-nd-matrix $\MatA$ determines a substitution-invariant one-dimensional consequence
relation over $\SigA$, denoted by $\SetSetMCR{\MatA}$,
such that
\begin{math}
    \FmSetA \SetSetMCR{\MatA} \FmSetB
    \text{ if, and only if, for all }
    \ValuationA\in\ValSetMatrix{\SigA}{\AlgA},
    \ValuationA[\FmSetA]~\cap~\ValuesSetCompl{\DesSetA} \neq \EmptySet
    \text{ or }
    \ValuationA[\FmSetB] \cap  \DesSetA \neq \EmptySet.
\end{math}
\noindent It is worth noting that
$\SetSetMCR{\MatA}$ is
finitary whenever the carrier of $\AlgA$ is finite (the proof runs very similar to that of the same result for Nmatrices~\cite[Theorem 3.15]{avron2005structures}).

A \DefEmph{strong homomorphism} between $\SigA$-matrices 
$\MatA_1 \SymbDef \Struct{\AlgA_1,D_1}$
and $\MatA_2 \SymbDef \Struct{\AlgA_2, D_2}$
is a homomorphism $h$ between $\AlgA_1$ and $\AlgA_2$
such that $\ValA \in D_1$ if, and only if, $h(\ValA) \in D_2$.
When there is a surjective strong homomorphism between $\MatA_1$
and $\MatA_2$, we have that $\SetSetCR{\MatA_1} = \SetSetCR{\MatA_2}$.

Now, to the Hilbert-style systems.
A (\DefEmph{schematic}) \DefEmph{\SetSet{} rule of inference} $\SSHRuleOfInf_\SSHRuleSchema$ 
is the collection
of all substitution instances
of the \SetSet{} statement $\SSHRuleSchema$,
called the \emph{schema} of $\SSHRuleOfInf_\SSHRuleSchema$.
Each $\SSHRuleInst \in  \SSHRuleOfInf_\SSHRuleSchema$
is called a \emph{rule instance of $\SSHRuleOfInf_\SSHRuleSchema$}.
A (\DefEmph{schematic}) \DefEmph{\SetSet{} H-system} $\SSHCalcA$ is a collection of
\SetSet{} rules of inference.
When we constrain the rule instances of $\SSHCalcA$
to having only singletons as succedents, 
we obtain the conventional notion of Hilbert-style system,
called here \emph{\SetFmla{} H-system}.

An \emph{$\SSHCalcA$-derivation} 
in a \SetSet{} H-system $\SSHCalcA$
is a rooted directed
tree $\SSHTreeA$ 
such that every node is 
labelled with sets of formulas or with
a discontinuation symbol~$\StarLabel$, and in which
every non-leaf node (that is, a node with child nodes) $\SSHNodeA$ in $\TreeA$ is an \emph{expansion of $\SSHNodeA$
by a rule instance} $\SSHRuleInst$ of $\SSHCalcA$. This means that
the antecedent of $\SSHRuleInst$ is contained in the label of $\SSHNodeA$
and that
$\SSHNodeA$ has exactly one child node for
each formula $\FmB$ in the succedent of~$\SSHRuleInst$.
These child nodes are, in turn, labelled with the same formulas as those of $\SSHNodeA$
plus the respective formula $\FmB$. In case $\SSHRuleInst$ has an empty
succedent, then $\SSHNodeA$ has a single child node
labelled with $\StarLabel$. 
Here we will consider only \emph{finitary} \SetSet{} 
H-systems, in which each rule instance has
finite antecedent and succedent.
In such cases, we only need to consider finite
derivations.
Figure~\ref{fig:derivationscheme}
illustrates how derivations 
using only finitary rules of inference may be graphically represented.
We denote by $\LabelFn{\SSHTreeA}(\SSHNodeA)$
the label of the node $\SSHNodeA$ in the tree
$\SSHTreeA$.
It is worth observing that, for \SetFmla{} H-systems, derivations are
linear trees (as rule instances have a single formula in their succedents), or, in other words,
just sequences of formulas built by applications of
the rule instances, matching thus
the conventional definition of Hilbert-style 
systems.
\vspace{-1em}
%
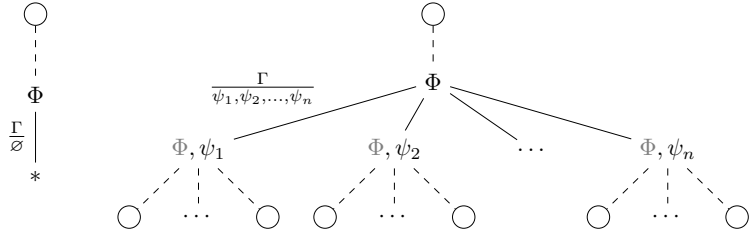
\begin{figure}[htb]
    \centering
    \scalebox{0.9}{
    \begin{tikzpicture}[every tree node/.style={},
       level distance=1.2cm,sibling distance=1cm,
       edge from parent path={(\tikzparentnode) -- (\tikzchildnode)}, baseline]
        \Tree[.\node[style={draw,circle}] {};
            \edge[dashed];
            [.\node[style={}] {$\FmSetA$};
            \edge node[auto=right] {$\HRule{\FmSetC}{\EmptySet}$};
            [.{$\StarLabel$}
            ]
            ]
        ]
    \end{tikzpicture}}
    \qquad
    \scalebox{0.9}{
    \begin{tikzpicture}[every tree node/.style={},
       level distance=1cm,sibling distance=.5cm,
       edge from parent path={(\tikzparentnode) -- (\tikzchildnode)}, baseline]
        \Tree[.\node[style={draw,circle}] {};
            \edge[dashed];
            [.\node[style={}] {$\FmSetA$};
            \edge node[auto=right] {$\HRule{\FmSetC}{\FmB_1,\FmB_2,\ldots,\FmB_n}$};
            [.${\color{gray}\FmSetA},\FmB_1$
                \edge[dashed];
                [.\node[style={draw,circle}]{};
                ]
                \edge[dashed];
                [.\node[style={}]{$\cdots$};
                ]
                \edge[dashed];
                [.\node[style={draw,circle}]{};
                ]
            ]
            [.${\color{gray}\FmSetA},\FmB_2$
                \edge[dashed];
                [.\node[style={draw,circle}]{};
                ]
                \edge[dashed];
                [.\node[style={}]{$\cdots$};
                ]
                \edge[dashed];
                [.\node[style={draw,circle}]{};
                ]
            ]
            [.$\ldots$
            ]
            [.${\color{gray}\FmSetA},\FmB_n$
                \edge[dashed];
                [.\node[style={draw,circle}]{};
                ]
                \edge[dashed];
                [.\node[style={}]{$\cdots$};
                ]
                \edge[dashed];
                [.\node[style={draw,circle}]{};
                ]
            ]
            ]
        ]
    \end{tikzpicture}
    }
    \caption{Graphical representation of $\SSHCalcA$-derivations, for
    $\SSHCalcA$ finitary.
    The dashed edges and blank circles represent other branches that may exist in the derivation.
    We usually omit the formulas inherited from the parent
    node, 
    exhibiting only the ones introduced by the
    applied rule of inference.
    In both cases,
    we must have $\FmSetC \subseteq \FmSetA$
    to enable the application of the rule.}
    \label{fig:derivationscheme}
\end{figure}

\vspace{-1.8em}

A node $\SSHNodeA$ of an $\SSHCalcA$-derivation $\SSHTreeA$ is called
\DefEmph{$\FmSetD$-closed} in case 
it is a leaf node with
$\LabelFn{\SSHTreeA}(\SSHNodeA) = \StarLabel$
or
$\LabelFn{\SSHTreeA}(\SSHNodeA) \cap \FmSetD \neq \EmptySet$. 
A branch of $\SSHTreeA$ is $\FmSetD$-closed
when it ends in a $\FmSetD$-closed node.
When every branch in
$\SSHTreeA$ is $\FmSetD$-closed,
we say that $\SSHCalcA$ is itself
\DefEmph{$\FmSetD$-closed}.
	An \DefEmph{$\SSHCalcA$-proof} of a \SetSet{} statement
$\Statement{\FmSetA}{\FmSetB}$ is 
a $\FmSetB$-closed $\SSHCalcA$-derivation $\SSHTreeA$
such that $\LabelFn{\SSHTreeA}(\TreeRoot{\SSHTreeA}) \subseteq \FmSetA$.

Consider the binary relation $\SSHCR{\SSHCalcA}$ on $\PowerSet{\LangSet{\SigA}{\PropSetA}}$
 such that
$\FmSetA \SSHCR{\SSHCalcA} \FmSetB$
if, and only if, there is
an $\SSHCalcA$-proof of $\Statement{\FmSetA}{\FmSetB}$.
This relation is the
smallest substitution-invariant one-dimensional consequence relation
containing the rules of inference of $\SSHCalcA$, 
and it is finitary when $\SSHCalcA$ is finitary.
Since \SetSet{} (and \SetFmla{}) H-systems canonically
induce one-dimensional consequence relations,
we may refer to them as 
\DefEmph{one-dimensional H-systems} 
or \DefEmph{one-dimensional axiomatizations}.
In case there is a proof of 
$\Statement{\FmSetA}{\FmSetB}$
whose nodes are labelled only with
subsets of $\GenSubfmlasFn{\BAnaSetA}[\FmSetA \cup \FmSetB]$, we write $\FmSetA \SetSetRel{\SSHCalcA}{\BAnaSetA} \FmSetB$
.
In case $\SSHCR{\SSHCalcA} = \SetSetRel{\SSHCalcA}{\BAnaSetA}$,
we say that $\SSHCalcA$ is
\emph{$\BAnaSetA$-analytic}.
Note that the ordinary notion of analyticity obtains when
$\BAnaSetA = \Set{\PropA}$.
From now on, whenever we use
the word ``analytic'' we will mean
this extended notion of
$\BAnaSetA$-analyticity,
for some $\BAnaSetA$ implicit in the context. When the
$\BAnaSetA$ happens to be important for us or
we identify any risk of confusion,
we will mention it explicitly.

In~\cite{marcelino19woll}, based on the seminal results on axiomatizability via \SetSet{} H-systems by Shoesmith and Smiley~\cite{ss1978}, it was proved that
any non-deterministic logical matrix $\MatA$ satisfying a criterion of sufficient expressiveness
is axiomatizable by a $\BAnaSetA$-analytic \SetSet{} Hilbert-style system,
which is finite whenever~$\MatA$ is finite,
where $\BAnaSetA$ is the set of separators
for the pairs of truth-values of $\MatA$.
According to such criterion, an nd-matrix is \emph{sufficiently expressive} when,
for every pair $\Pair{\ValA}{\ValB}$
of distinct truth-values,
there is a unary formula $\SepA$,
called a \emph{separator for $\Pair{x}{y}$},
such that $\AlgInterp{\SepA}{\AlgA}(\ValA) \subseteq \DesSetA$
and $\AlgInterp{\SepA}{\AlgA}(\ValB) \subseteq \ValuesSetCompl{\DesSetA}$, 
or vice-versa; 
in other words, when every pair of
distinct truth-values is \emph{separable in $\MatA$}.

We emphasize that it is essential for the above result 
the adoption of \SetSet{} H-systems, instead
of the more restricted \SetFmla{} H-systems.
%
In fact, while two-valued matrices may always be finitely axiomatized by \SetFmla{} H-systems \cite{rautenberg1981}, there are sufficiently expressive three-valued deterministic matrices \cite{pala1994} and even quite simple two-valued non-deterministic matrices \cite{marcelino2021} that fail to be finitely axiomatized by \SetFmla{} H-systems.
When the nd-matrix at hand is not sufficiently expressive, we may observe the same phenomenon 
of not having a finite axiomatization also in terms
of \SetSet{} H-systems, even if the said nd-matrix
is finite. 
The first example (and, to the best of our knowledge, the only one in the current literature) of this fact appeared in~\cite{marcelino19woll}, which we reproduce here for later reference:

\begin{example}
\label{ex:three-valued-non-ax-set-set}
Consider the signature $\SigA\SymbDef \{\SigArity{\Sigma}{k}\}_{\ArityA \in \NatSet}$ such that 
$\SigA_1 \SymbDef \Set{g, h}$
and $\SigA_k \SymbDef \EmptySet$ for all
$k \neq 1$.
Let $\MatA \SymbDef \Struct{\AlgA, \Set{\tVal}}$ be a 
$\SigA$-nd-matrix, with $\ValSetA \SymbDef \Set{\tVal,\fVal,\BotVal}$ and 

\begin{align*}
\AlgInterp{g}{\AlgA}(\ValA)
=
\begin{cases}
  \Set{\tVal}, & \text{if } \ValA = \BotVal\\ 
   \ValSetA, & \text{otherwise}\\
\end{cases}
\quad
\AlgInterp{h}{\AlgA}(\ValA)
=
\begin{cases}
   \Set{\fVal}, & \text{if } \ValA = \fVal\\ 
   \ValSetA, & \text{otherwise}\\
\end{cases}
\end{align*}
\noindent This matrix is not sufficiently expressive
because there is no separator for the pair $\Pair{\fVal}{\BotVal}$, and
\cite{marcelino19woll} proved that
it is not axiomatizable by a finite \SetSet{} H-system,
even though an infinite \SetSet{} system that captures it
has a quite simple description in terms of the following infinite collection of schemas:
\vspace{-1mm}
\begin{gather*}
    \HRule{h^i(\PropA)}{\PropA, g(\PropA)},
    \text{ for all $i \in \NatSet$}.
\end{gather*}
\end{example}
\vspace{-1mm}

In the next section, we reveal another example of
this same phenomenon, this time
of the known LFI~\cite{carnielli2002}
called \mciName{}.
In the path of proving that this logic is
not axiomatizable by a finite \SetSet{}
H-system, we will show that
there are infinitely
many LFIs between \mbcName{} and \mciName{},
organized in a strictly increasing chain whose
limit is \mciName{} itself.

Before continuing, it is worth emphasizing that any given 
non-sufficiently expressive
nd-matrix may be conservatively extended to a
sufficiently expressive nd-matrix provided new connectives
are added to the language~\cite{marcelino19syn}. These new connectives have the sole purpose of
separating the pairs of truth-values for which no
separator is available in the original language.
The \SetSet{} system produced from this extended nd-matrix
can, then, be used to reason over the original logic, since
the extension is conservative. 
However, these new connectives,
which a priori have no meaning, are very likely to appear in
derivations of consecutions of the original logic.
%
This might not look like an attractive option to inferentialists who believe that purity of the schematic rules governing a given logical constant is essential for the meaning of the latter to be coherently fixed.
In the subsequent sections, we will introduce and apply a potentially more expressive
notion of logic in order to
provide a \emph{finite} and \emph{analytic} H-system
for logics that are not finitely axiomatizable in one dimension,
while preserving their original languages.

\section{The logic \mciName{} is not finitely axiomatizable}
\label{sec:mci-non-finit}
A one-dimensional logic $\SSHCR{}$ over $\SigA$ is said to be
\mbox{\emph{$\mciNeg$-paraconsistent}} when
we have $\PropA,\mciNeg\PropA \;\nSetSetRel{}{}\; \PropB$,
for $\PropA,\PropB \in \PropSetA$.
%
Moreover, $\SSHCR{}$
is \emph{$\mciNeg$-gently explosive} in case there is a collection $\bigcirc(\PropA) \subseteq \LangSet{\SigA}{\PropSetA}$ of unary formulas such that, 
for some $\FmA \in \LangSet{\SigA}{\PropSetA}$,
we have
$\bigcirc(\FmA), \FmA \nSetSetRel{}{} \varnothing$;
$\bigcirc(\FmA), \mciNeg\FmA \nSetSetRel{}{} \varnothing$,
and, for all $\FmA \in \LangSet{\Sigma}{\PropSetA}$,
$\bigcirc(\FmA),\FmA,\mciNeg\FmA \; \SSHCR{} \varnothing$.
We say that $\SSHCR{}$ is a
\emph{logic of formal inconsistency (LFI)} in case it is $\mciNeg$-paraconsistent yet
$\mciNeg$-gently explosive.
In case $\bigcirc(\PropA) = \{\mciCons\PropA\}$,
for $\mciCons$ a (primitive or composite) \emph{consistency connective}, the logic is said also to be a \emph{\textbf{C}-system}.
In what follows,
let $\SigMCI$ be the propositional signature such that
$\SigMCI_{1} \SymbDef \{\mciNeg,\mciCons\}$,
$\SigMCI_{2} \SymbDef \{\land,\lor,\mciImp\}$,
and $\SigMCI_{k} \SymbDef \EmptySet$
for all $k \not\in \{1,2\}$.

One of the simplest
\textbf{C}-systems
is the logic \mbcName{}, which was 
first presented in terms of
a \SetFmla{} H-system over $\SigMCI$
obtained by extending 
any \SetFmla{} H-system for
positive classical logic (\CPL)
with the following pair of axiom schemas:
\begin{description}[labelindent=.5cm, labelwidth=1.3cm, font=\textrm]
	\item[\namedlabel{rule:mciExcMid}{\mciAxTenAxiom}] $\PropA \lor \mciNeg\PropA$
	\item[\namedlabel{rule:mcibc}{\mciBcOneAxiom}] $\mciCons\PropA \mciImp (\PropA \mciImp (\mciNeg\PropA \mciImp \PropB))$
\end{description}



The logic \mciName{}, in turn,
is the \textbf{C}-system resulting from extending the H-system for \mbcName{}
with the following (infinitely many) axiom schemas 
\cite{marcos2008}
(the resulting \SetFmla{} H-system
is denoted here by $\mciHilbertName$):
\begin{description}[labelindent=.5cm, labelwidth=1.1cm, font=\textrm]
	\item[\namedlabel{rule:mciCi}{\mciCiAxiom}] $\mciNeg\mciCons\PropA \mciImp (\PropA \land \mciNeg\PropA)$
	\item[\namedlabel{rule:mciCij}{\mciCiJAxiom}] $\mciCons\mciNeg^j\mciCons\PropA$ (for all $0 \leq j < \omega$)
\end{description}
A unary connective $\ConA$ is said to constitute
a \emph{classical negation} 
in a one-dimensional logic $\SSHCR{}$ extending $\CPL{}$
in case,
for all $\FmA,\FmB \in \LangSet{\SigA}{\PropSetA}$,
$\EmptySet \SSHCR{} \FmA \lor \ConA(\FmA)$
and 
$\EmptySet \SSHCR{} \FmA \mciImp (\ConA(\FmA) \mciImp \FmB)$.
One of the main differences between \mciName{} and \mbcName{}
is that an inconsistency connective $\bullet$ may be defined in the former using the paraconsistent negation, instead of a classical negation,
by setting $\bullet\FmA \SymbDef \mciNeg\mciCons\FmA$~\cite{marcos2008}. 

%
%

Both logics above were presented in~\cite{carniellimarcos2007} in ways other than H-systems:
via tableau systems, via bivaluation semantics and
via possible-translations semantics.
In addition, 
while these logics are known not to be characterizable by a single finite deterministic matrix \cite{marcos2008}, a characteristic nd-matrix is available for \mbcName{}~\cite{avron2005modular} and a 5-valued non-deterministic logical matrix is available for \mciName{}~\cite{avron2008},
witnessing the importance of non-deterministic semantics in the study 
of non-classical logics.
Such characterizations, moreover, 
allow for the extraction of sequent-style systems for these
logics by the methodologies developed
in~\cite{avron2007,avron2005igpl}.
Since \mciName{}'s 5-valued nd-matrix will be useful for us in future sections,
we recall it below for ease of reference.

\begin{definition}
    \label{def:avrons-five-valued-matrix}
    Let
$\FiveValuesSetName \SymbDef \Set{\mcif,\mciF, \mciI, \mciT, \mcit}$
and
$\DesSetFive \SymbDef \Set{\mciI, \mciT, \mcit}$.
Define 
the $\SigMCI$-matrix
$\mciMatrixName \SymbDef \Struct{\mciAlgebraName, \DesSetFive}$
such that $\mciAlgebraName \SymbDef \Struct{\FiveValuesSetName, \AlgInterp{\cdot}{\mciAlgebraName}}$
interprets the connectives of $\SigMCI$ according to
the following:
\begin{gather*}
    \AlgInterp{\land}{\mciAlgebraName}(\ValA_1,\ValA_2) \SymbDef 
    \begin{cases}
    \Set{\mcif} & \text{if either $\ValA_1 \not\in \DesSetFive$ or $\ValA_2 \not\in \DesSetFive$}\\
    \Set{\mciI,\mcit} & \text{otherwise}
    \end{cases}\\
    \AlgInterp{\lor}{\mciAlgebraName}(\ValA_1,\ValA_2) \SymbDef 
    \begin{cases}
    \Set{\mciI,\mcit} & \text{if either $\ValA_1 \in \DesSetFive$ or $\ValA_2 \in \DesSetFive$}\\
    \Set{\mcif} & \text{if $\ValA_1$,$\ValA_2 \not\in \DesSetFive$}
    \end{cases}\\
    \AlgInterp{\mciImp}{\mciAlgebraName}(\ValA_1,\ValA_2) \SymbDef 
    \begin{cases}
    \Set{\mciI,\mcit} & \text{if either $\ValA_1 \not\in \DesSetFive$ or $\ValA_2 \in \DesSetFive$}\\
    \Set{\mcif} & \text{if $\ValA_1 \in \DesSetFive$  and $\ValA_2 \not\in \DesSetFive$}
    \end{cases}\\
    \begin{tabular}{>{\centering\arraybackslash}m{.7cm}|>{\centering\arraybackslash}m{.7cm}|>{\centering\arraybackslash}m{.7cm}|>{\centering\arraybackslash}m{.7cm}|>{\centering\arraybackslash}m{.7cm}|>{\centering\arraybackslash}m{.7cm}}
          & \mcif&\mciF&\mciI&\mciT&\mcit \\
         \midrule
         $\AlgInterp{\mciNeg}{\mciAlgebraName}$&\{\mciI,\mcit\}&\{\mciT\}&\{\mciI,\mcit\}&\{\mciF\}&\{\mcif\}\\
    \end{tabular}\hspace{.5cm}    
    \begin{tabular}{>{\centering\arraybackslash}m{.7cm}|>{\centering\arraybackslash}m{.7cm}|>{\centering\arraybackslash}m{.7cm}|>{\centering\arraybackslash}m{.7cm}|>{\centering\arraybackslash}m{.7cm}|>{\centering\arraybackslash}m{.7cm}}
          & \mcif&\mciF&\mciI&\mciT&\mcit \\
         \midrule
         $\AlgInterp{\circ}{\mciAlgebraName}$&\{\mciT\}&\{\mciT\}&\{\mciF\}&\{\mciT\}&\{\mciT\}\\
    \end{tabular}
\end{gather*}
\end{definition}

One might be tempted to
apply the axiomatization algorithm of~\cite{marcelino19woll}
to the finite non-deterministic logical matrix
defined above
to obtain a finite and analytic \SetSet{} system
for \mciName. However, it is not obvious, at first,
whether this matrix is
sufficiently expressive or not
(we will, in fact, prove that it is not).
%
In what follows, we will show now \mciName{} is actually axiomatizable neither by a finite \SetFmla{}
H-system (first part), nor by a finite \SetSet{} H-system (second part);
it so happens, thus, that 
it was not by chance that $\mciHilbertName$ has been originally presented with infinitely many rule schemas.
For the first part,
we rely on the following general result:

\begin{theorem}[\cite{wojcicki1988}, Theorem 2.2.8, adapted]
\label{the:nonfinaxiomat}
Let $\SetFmlaCR{}$ be a standard Tarskian consequence relation.
Then $\SetFmlaCR{}$ is axiomatizable by a finite
\SetFmla{} H-system if, and only if,
there is no strictly increasing sequence
$\SetFmlaCR{0},\SetFmlaCR{1},\ldots,\SetFmlaCR{n},\ldots$
of standard Tarskian consequence relations such that
$\SetFmlaCR{} \;\,=\, \Supremum_{i \in \omega} \SetFmlaCR{i}$.
\end{theorem}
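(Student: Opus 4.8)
The plan is to prove this biconditional by a compactness-style argument, with two preliminary observations carrying most of the weight. First, every standard Tarskian consequence relation $\SetFmlaCR{}$ admits an axiomatization: let $\mathcal{H}^{\SetFmlaCR{}}$ be the \SetFmla{} H-system whose schemas are all sequents $\Statement{\FmSetA_0}{\FmA}$ with $\FmSetA_0$ finite and $\FmSetA_0 \SetFmlaCR{} \FmA$. Finitariness of $\SetFmlaCR{}$ guarantees that every instance of $\SetFmlaCR{}$ is the conclusion of a one-step derivation, while substitution-invariance of $\SetFmlaCR{}$ guarantees that the schematic closure of each such schema stays inside $\SetFmlaCR{}$; hence the relation induced by $\mathcal{H}^{\SetFmlaCR{}}$ is exactly $\SetFmlaCR{}$, and when the language is countable (as here, where $\SigMCI$ is finite) $\mathcal{H}^{\SetFmlaCR{}}$ has only countably many rules. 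Second, if $\SetFmlaCR{0}\subseteq\SetFmlaCR{1}\subseteq\cdots$ is an $\omega$-chain of standard Tarskian consequence relations, then $\bigcup_{i\in\omega}\SetFmlaCR{i}$ is again one — reflexivity, monotonicity and finitariness are immediate, and transitivity (cut) holds because any two relevant instances already live at a common finite level, the chain being directed — so $\Supremum_{i\in\omega}\SetFmlaCR{i}=\bigcup_{i\in\omega}\SetFmlaCR{i}$. In general the lattice supremum may be strictly larger than the union; it is precisely this directedness that we exploit below.

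For the direction ``no such chain $\Rightarrow$ finitely axiomatizable'' I would argue by contraposition. Assume $\SetFmlaCR{}$ is not finitely based, enumerate the rules of $\mathcal{H}^{\SetFmlaCR{}}$ as $r_0,r_1,\ldots$, set $\mathcal{H}_n \SymbDef \{r_0,\ldots,r_n\}$, and let $\SetFmlaCR{n}$ be the Tarskian consequence relation axiomatized by $\mathcal{H}_n$ (standard, since finitely based). The chain $(\SetFmlaCR{n})_{n\in\omega}$ is non-decreasing and $\Supremum_n\SetFmlaCR{n}=\bigcup_n\SetFmlaCR{n}=\SetFmlaCR{}$, because every \SetFmla{} derivation invokes only finitely many rules. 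If this chain were eventually constant it would exhibit $\SetFmlaCR{}$ as finitely based, against the assumption; hence for every $n$ there is some $m>n$ with $\SetFmlaCR{n}\subsetneq\SetFmlaCR{m}$, and by recursion one extracts a strictly increasing subsequence $\SetFmlaCR{n_0}\subsetneq\SetFmlaCR{n_1}\subsetneq\cdots$. Its supremum is still $\bigcup_i\SetFmlaCR{n_i}=\bigcup_n\SetFmlaCR{n}=\SetFmlaCR{}$, since every $\SetFmlaCR{n}$ is dominated by some $\SetFmlaCR{n_i}$, so this is the required chain.

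For the direction ``finitely axiomatizable $\Rightarrow$ no such chain'' I would argue by contradiction. Suppose $\SetFmlaCR{}$ is axiomatized by a finite H-system $\mathcal{H}=\{r_0,\ldots,r_k\}$, each $r_i$ being the schematic closure of a finite schema $\Statement{\FmSetA_i}{\FmA_i}$, and suppose $\SetFmlaCR{0}\subsetneq\SetFmlaCR{1}\subsetneq\cdots$ were a strictly increasing $\omega$-chain of standard Tarskian consequence relations with $\Supremum=\SetFmlaCR{}$. By the second observation $\SetFmlaCR{}=\bigcup_n\SetFmlaCR{n}$, so each schema $\Statement{\FmSetA_i}{\FmA_i}$, lying in $\SetFmlaCR{}$, already holds in some $\SetFmlaCR{n_i}$; put $N\SymbDef\max_{i\le k}n_i$. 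Since $\SetFmlaCR{N}$ is substitution-invariant it contains every instance of each $r_i$, i.e. all rules of $\mathcal{H}$; as $\SetFmlaCR{}$ is the smallest substitution-invariant Tarskian consequence relation containing those rules, $\SetFmlaCR{}\subseteq\SetFmlaCR{N}$. Combined with $\SetFmlaCR{N}\subseteq\bigcup_n\SetFmlaCR{n}=\SetFmlaCR{}$ this gives $\SetFmlaCR{N}=\SetFmlaCR{}$, whence $\SetFmlaCR{N+1}\subseteq\SetFmlaCR{}=\SetFmlaCR{N}$, contradicting strictness.

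The routine parts are the verification that unions of chains satisfy reflexivity, monotonicity and finitariness. The load-bearing points — where the only real care is needed — are (i) that the lattice supremum of an $\omega$-chain coincides with its union, so that membership in the limit is always witnessed at a finite stage (this drives both directions), and (ii) in the backward direction, the countability of the canonical axiomatization, which is exactly where countability of the language enters (innocuous in our setting). One must also keep the paper's reading of ``schematic rule'' as the set of all substitution instances of a schema aligned with the appeals to substitution-invariance; and note that the degenerate cases — $\SetFmlaCR{}$ the least, respectively the inconsistent, Tarskian relation — are consistent with the statement, being finitely axiomatizable and admitting no strictly increasing chain with the prescribed supremum.
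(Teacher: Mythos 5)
Your proposal is correct, and it is essentially the classical Łoś--Suszko/Wójcicki argument; note that the paper itself imports this theorem from Wójcicki's monograph with a citation and offers no proof of its own, so there is nothing to compare against beyond the standard one you have reconstructed. Both directions are sound: the canonical (generally infinite) axiomatization by all finite valid sequents, the finite-initial-segment chain and extraction of a strictly increasing subsequence for the contrapositive of one direction, and the ``all finitely many schemas appear at a common finite stage $N$, whence $\vdash_N$ already contains the least substitution-invariant consequence relation generated by the rules'' argument for the other. The two load-bearing points you isolate are indeed the right ones, and you handle them correctly: (i) that for a directed chain of \emph{finitary} relations the lattice supremum coincides with the set-theoretic union (finitariness is exactly what lets cut instances be witnessed at a common finite level), and (ii) that the canonical axiomatization must be enumerable in order type $\omega$, which is where countability of the language enters. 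The latter is a genuine hypothesis hidden in the ``adapted'' statement --- Wójcicki's Theorem 2.2.8 is stated for standard (countable-language) propositional settings --- but it is harmless here since $\SigMCI$ is finite and the set of propositional variables is denumerable. The only cosmetic caution is to keep track of the paper's convention that a Tarskian consequence relation is the \SetFmla{} restriction of a one-dimensional consequence relation, so that ``cut'' for the union of the chain should be checked in that restricted form; your directedness argument covers this without change.
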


In order to apply the above theorem, we first present a family of finite \SetFmla{} H-systems
that, in the sequel, will be used to provide
an increasing sequence of standard Tarskian
consequence relations whose supremum is precisely \mciName{}.
Next, we show that this sequence is
stricly increasing, by employing the
matrix methodology traditionally used for showing the independence
of axioms in a proof system. 

\begin{definition}
\label{def:mcik}
For each $k \in \omega$, let $\mciHilbertName^k$ be 
a \SetFmla{} H-system for 
positive classical
logic together with the schemas \ref{rule:mciExcMid},
\ref{rule:mcibc}, \ref{rule:mciCi}
and
\ref{rule:mciCij}, for all
$0 \leq j \leq k$.
\end{definition}

Since $\mciHilbertName^k$ may be obtained from $\mciHilbertName$
by deleting some (infinitely many) axioms, it is immediate that:

\begin{proposition}
\label{prop:mciextendsall}
	For every $k \in \omega$,
	$\SetFmlaCR{\mciHilbertName^{k}} \; \subseteq \; \SetFmlaCR{\mciName}$.
\end{proposition}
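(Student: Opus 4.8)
The plan is to exploit the characterization recalled in Section~\ref{sec:one-dim}: the relation $\SSHCR{\SSHCalcA}$ induced by a (\SetSet{} or \SetFmla{}) H-system $\SSHCalcA$ is the \emph{smallest} substitution-invariant one-dimensional consequence relation containing the rules of inference of $\SSHCalcA$. From this, monotonicity is immediate: if $\SSHCalcA_1 \subseteq \SSHCalcA_2$ as collections of rule schemas, then $\SSHCR{\SSHCalcA_1} \subseteq \SSHCR{\SSHCalcA_2}$, because $\SSHCR{\SSHCalcA_2}$ is in particular a substitution-invariant one-dimensional consequence relation containing all the rules of $\SSHCalcA_1$, hence it contains the least such relation; restricting both sides to \SetFmla{} statements preserves this inclusion, so the same holds for the \SetFmla{} Tarskian companions $\SetFmlaRel{(\cdot)}{}$.

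It then suffices to observe that, for a suitable common choice of base \SetFmla{} H-system for positive classical logic, $\mciHilbertName^k$ is literally a subsystem of $\mciHilbertName$. Indeed, by Definition~\ref{def:mcik}, $\mciHilbertName^k$ consists of a fixed \SetFmla{} H-system for \CPL{}, the schemas \ref{rule:mciExcMid}, \ref{rule:mcibc}, \ref{rule:mciCi}, and the schemas \ref{rule:mciCij} for $0 \leq j \leq k$; the system $\mciHilbertName$ has exactly the same components, except that it includes \ref{rule:mciCij} for all $0 \leq j < \omega$. Taking the same \CPL{}-fragment in both, every rule schema of $\mciHilbertName^k$ is a rule schema of $\mciHilbertName$, i.e.\ $\mciHilbertName^k \subseteq \mciHilbertName$, and so by monotonicity $\SetFmlaCR{\mciHilbertName^k} \subseteq \SetFmlaCR{\mciHilbertName} = \SetFmlaCR{\mciName}$, the last identity being the very definition of \mciName{} as the logic axiomatized by $\mciHilbertName$. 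Equivalently, one may argue directly at the level of derivations: every $\mciHilbertName^k$-proof is already an $\mciHilbertName$-proof, since each rule instance occurring in it is an instance of a schema still present in $\mciHilbertName$; hence any \SetFmla{} statement provable in $\mciHilbertName^k$ is provable in $\mciHilbertName$.

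There is no genuine obstacle here — the statement is, as the text says, immediate — and the only point worth a line of care is the one just flagged: that $\mciHilbertName^k$ and $\mciHilbertName$ be set up over the \emph{same} \SetFmla{} presentation of \CPL{}, so that the set-theoretic inclusion of rule schemas literally holds. This is harmless, since all \SetFmla{} H-systems for \CPL{} induce the same Tarskian consequence relation, and the induced logics $\SetFmlaCR{\mciHilbertName^k}$ and $\SetFmlaCR{\mciName}$ do not depend on that choice; fixing one common base system suffices to obtain the displayed inclusion.
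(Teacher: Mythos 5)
Your argument is correct and matches the paper's: the paper treats the inclusion as immediate precisely because $\mciHilbertName^{k}$ arises from $\mciHilbertName$ by deleting (infinitely many) instances of \ref{rule:mciCij}, so every $\mciHilbertName^{k}$-derivation is an $\mciHilbertName$-derivation and monotonicity of the induced consequence relation does the rest. Your added remark about fixing a common \SetFmla{} presentation of \CPL{} is a harmless and reasonable precaution, but it does not change the substance of the argument.
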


The way we define the promised increasing sequence
of consequence relations in the next result
is by taking the systems $\mciHilbertName^{k}$
with odd superscripts, namely, we will be working with the sequence
$\SetFmlaCR{\mciHilbertName^{1}}, \SetFmlaCR{\mciHilbertName^{3}},
\SetFmlaCR{\mciHilbertName^{5}}, \ldots$\;
Excluding the cases where $k$ is even will facilitate,
in particular, the proof of \autoref{lem:mcistrictinc}.

\begin{lemma}
\label{lem:mcisequence}
For each $1 \leq k < \omega$,
let 
$\SetFmlaCR{k} \;\SymbDef\; 
\SetFmlaCR{\mciHilbertName^{2k - 1}}$.
Then~$\SetFmlaCR{1} \;\subseteq\; \SetFmlaCR{2} \;\subseteq \ldots$,
and
\begin{equation*}
	\SetFmlaCR{\mciName} \;= \Supremum{}_{1 \leq k < \omega} \SetFmlaCR{k}.
\end{equation*}
\end{lemma}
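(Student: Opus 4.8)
The plan is to treat the two assertions separately: first the weak monotonicity of the sequence $\SetFmlaCR{1} \subseteq \SetFmlaCR{2} \subseteq \cdots$, which is immediate, and then the supremum identity $\SetFmlaCR{\mciName} = \Supremum_{1 \leq k < \omega}\SetFmlaCR{k}$, which reduces to a routine compactness observation.

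For the monotonicity, I would simply note that for $k \leq k'$ the H-system $\mciHilbertName^k$ is, as a set of rule schemas, contained in $\mciHilbertName^{k'}$ — the latter is obtained from the former by adjoining the instances of \ref{rule:mciCij} with $k < j \leq k'$ — so every $\mciHilbertName^k$-proof is an $\mciHilbertName^{k'}$-proof, whence $\SetFmlaCR{\mciHilbertName^k} \subseteq \SetFmlaCR{\mciHilbertName^{k'}}$; specialising the exponents to the form $2i-1$ gives the chain. (The restriction to odd exponents is irrelevant here; it matters only for the strictness argument, carried out separately.)

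For the supremum identity I would prove the two inclusions. The inclusion $\Supremum_{1 \leq k < \omega}\SetFmlaCR{k} \subseteq \SetFmlaCR{\mciName}$ is immediate from \autoref{prop:mciextendsall}: each $\SetFmlaCR{k} = \SetFmlaCR{\mciHilbertName^{2k-1}}$ is contained in $\SetFmlaCR{\mciName}$, so $\SetFmlaCR{\mciName}$ is an upper bound of the family $\{\SetFmlaCR{k}\}_{1 \leq k < \omega}$ in the complete lattice of Tarskian consequence relations over $\SigMCI$, and hence contains its least upper bound. For the reverse inclusion I would invoke finitariness: $\SetFmlaCR{\mciName}$ is finitary, since $\mciHilbertName$ is a finitary H-system. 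Thus, if $\FmSetA \SetFmlaCR{\mciName}\FmB$, then already $\FmSetA_0 \SetFmlaCR{\mciName}\FmB$ for some finite $\FmSetA_0 \subseteq \FmSetA$, and this is witnessed by a \emph{finite} $\mciHilbertName$-derivation. Being finite, that derivation invokes only finitely many rule instances, in particular only finitely many instances of \ref{rule:mciCij}, say those with $j$ ranging over a finite set $J \subseteq \omega$; picking $k \geq 1$ with $2k - 1 \geq \max J$ (and $k = 1$ if $J = \EmptySet$), all of them are already available in $\mciHilbertName^{2k-1}$, so the very same object is an $\mciHilbertName^{2k-1}$-derivation and hence $\FmSetA_0 \SetFmlaCR{k}\FmB$. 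By dilution, $\FmSetA \SetFmlaCR{k}\FmB$, and since $\SetFmlaCR{k}$ lies below the supremum, $\FmSetA$ and $\FmB$ stand in the relation $\Supremum_{1 \leq k < \omega}\SetFmlaCR{k}$, as required. One could equivalently remark that the set-theoretical union of this chain of standard Tarskian consequence relations is itself standard and therefore equals the supremum, but this refinement is not needed.

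I do not expect a genuine obstacle here. The only step needing a little care is the compactness move in the reverse inclusion — passing from $\FmSetA \SetFmlaCR{\mciName}\FmB$ to a single finite derivation using only finitely many instances of \ref{rule:mciCij} — which rests on the finitariness of $\SetFmlaCR{\mciName}$ and on $\mciHilbertName$-derivations being finite objects. The substantial content, namely that the chain is \emph{strictly} increasing, is not claimed by this lemma; it is handled afterwards in \autoref{lem:mcistrictinc} via the matrix method for axiom independence, by producing for each $k$ a logical matrix validating $\mciHilbertName^{2k-1}$ but refuting an instance of \ref{rule:mciCij} with a larger index.
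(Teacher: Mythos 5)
Your proposal is correct and follows essentially the same route as the paper: monotonicity from schema inclusion, the easy inclusion from Proposition~\ref{prop:mciextendsall} plus the least-upper-bound property, and the reverse inclusion from the observation that a (finite) $\mciHilbertName$-derivation invokes only finitely many instances of \ref{rule:mciCij} and hence already lives in some $\mciHilbertName^{2m-1}$. The only cosmetic difference is that you phrase the conclusion as $\SetFmlaCR{k}$ lying below the supremum, while the paper verifies the consecution in an arbitrary upper bound; these are equivalent.
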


Finally, we prove that the sequence outlined
in the paragraph before Lemma~\ref{lem:mcisequence} is strictly increasing.
In order to achieve this, we define, for each $1 \leq k < \omega$, a $\SigMCI$-matrix $\MatA_k$
and prove that $\mciHilbertName^{2k-1}$ is sound with respect to such matrix.
Then, in the second part of the proof (the ``independence part''), we show that, for each $1 \leq k < \omega$, 
$\MatA_k$ fails to validate
the rule schema \ref{rule:mciCij}, for $j = 2k$,
which is present in $\mciHilbertName^{2(k+1)-1}$.
In this way, by the contrapositive of the soundness
result proved in the first part, we will have~\ref{rule:mciCij} 
provable in $\mciHilbertName^{2(k+1)-1}$
while unprovable in
$\mciHilbertName^{2k-1}$.
In what follows, for any $k \in \NatSet$,
we use $\Suc{k}$ to refer to the successor of~$k$.

\begin{definition}\sloppy
	Let $1 \leq k < \omega$. 
	Define the $2\Suc{k}$-valued
	$\SigMCI$-matrix
	$\MatA_{k} \SymbDef \Struct{\AlgA_k, D_k}$
	such that
	$D_k \SymbDef \{\Suc{k} + 1, \ldots, 2\Suc{k}\}$
	and $\AlgA_k \SymbDef \Struct{\{1,\ldots,2\Suc{k}\}, \AlgInterp{\cdot}{\AlgA_k}}$, the interpretation of $\SigMCI$ in $\AlgA_k$
	given by the following operations:
	\begin{gather*}
\ValA \AlgInterp{\lor}{\AlgA_k} \ValB \SymbDef
\begin{cases}
    1 & \text{if } \ValA,\ValB \in \ValueSetComp{D_k}\\
    \Suc{k}+1 & \text{otherwise}
\end{cases}
\qquad
\ValA \AlgInterp{\land}{\AlgA_k} \ValB \SymbDef
\begin{cases}
    \Suc{k}+1 & \text{if } \ValA,\ValB \in D_k\\
    1 & \text{otherwise}
\end{cases}
\\
\ValA \AlgInterp{\mciImp}{\AlgA_k} \ValB \SymbDef
\begin{cases}
    1 & \text{if } \ValA \in D_k \text{ and } \ValB \not\in \ValueSetComp{D_k}\\
    \Suc{k}+1 & \text{otherwise}
\end{cases}\\
\AlgInterp{\mciCons}{\AlgA_k} \ValA \SymbDef
\begin{cases}
    1 & \text{if } \ValA = 2\Suc{k}\\
    \Suc{k}+1 & \text{otherwise}\\
\end{cases}
\;
\AlgInterp{\mciNeg}{\AlgA_k} \ValA \SymbDef
\begin{cases}
    \Suc{k}+1 & \text{if } \ValA \in \{1, 2\Suc{k}\}\\
    \ValA + \Suc{k} & \text{if } 2 \leq \ValA \leq \Suc{k}\\
    \ValA-(\Suc{k}-1) & \text{if } \Suc{k}+1 \leq \ValA \leq 2\Suc{k}-1
\end{cases}
\end{gather*}

\end{definition}

Before continuing, we state results concerning
this construction, which will be used in the
remainder of the current line of argumentation.
In what follows, when there is no risk of confusion, we omit the subscript `$\AlgA_k$'
from the interpretations to simplify the notation.

\begin{lemma}
\label{lem:itneg}
For all $k \geq 1$ and $1 \leq m \leq 2k$,
\[
\AlgInterp{\mciNeg}{\AlgA_k}^m(\Suc{k}+1) =
\begin{cases}
(\Suc{k}+1)+\frac{m}{2}, & \text{if } m \text{ is even}\\
1 +\frac{m+1}{2}, & otherwise\\
\end{cases}
\]
\end{lemma}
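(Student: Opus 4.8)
The plan is to prove the identity by a two-part induction organized around the two arithmetic ``phases'' traversed by the orbit of $\Suc{k}+1$ under $\AlgInterp{\mciNeg}{\AlgA_k}$. Throughout, abbreviate $\AlgInterp{\mciNeg}{\AlgA_k}$ to $f$ and $\Suc{k}$ to $n$, so that the carrier of $\AlgA_k$ is $\{1,\ldots,2n\}$ and, from the piecewise definition, $f(v) = v+n$ for $2 \leq v \leq n$ (a ``step up'' out of the lower block) and $f(v) = v-(n-1)$ for $n+1 \leq v \leq 2n-1$ (a ``step down'' out of the upper block). The observation driving the argument is that, along the orbit of $n+1$, the map $f$ alternates strictly between these two regimes. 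Accordingly, I would first reformulate the goal as the pair of claims, for $j \geq 0$: (a) $f^{2j}(n+1) = n+1+j$, whenever $2j \leq 2k$; and (b) $f^{2j+1}(n+1) = j+2$, whenever $2j+1 \leq 2k$. These are exactly the displayed formula after substituting $j = \frac{m}{2}$ in the even case and $j = \frac{m-1}{2}$ in the odd case, using the identity $1 + \frac{m+1}{2} = \frac{m-1}{2} + 2$.

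Next I would carry out the induction on $j$. For the base case $j=0$, claim (a) is immediate since $f^0(n+1)=n+1$; for (b), the hypothesis $k \geq 1$ gives $n \geq 2$, hence $n+1 \leq 2n-1$, so $n+1$ lies in the upper block and $f(n+1) = (n+1)-(n-1) = 2$, as required. For the inductive step, assume (a) and (b) hold for $j$. To establish (a) for $j+1$: $f^{2(j+1)}(n+1) = f(f^{2j+1}(n+1)) = f(j+2)$ by (b); since $0 \leq j \leq k-1$ we have $2 \leq j+2 \leq n$, so $j+2$ lies in the lower block and $f(j+2) = (j+2)+n = n+1+(j+1)$. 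To establish (b) for $j+1$ (assuming $j+1 \leq k-1$): $f^{2(j+1)+1}(n+1) = f(f^{2(j+1)}(n+1)) = f(n+j+2)$ by the instance of (a) just proved; since $0 \leq j \leq k-2$ we have $n+1 < n+j+2 \leq 2n-1$, so $n+j+2$ lies in the upper block and $f(n+j+2) = (n+j+2)-(n-1) = (j+1)+2$.

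Finally I would observe that every $m$ with $1 \leq m \leq 2k$ is accounted for: an even $m = 2j$ has $1 \leq j \leq k$ and is covered by (a), while an odd $m = 2j+1$ has $0 \leq j \leq k-1$ and is covered by (b); rereading the conclusions through the substitutions above yields precisely the claimed closed form. The only point that requires care — a mild bookkeeping matter rather than a genuine obstacle — is the verification of the range conditions at each step, i.e.\ checking that the value just produced lands in the block of the piecewise definition under which the next application of $f$ acts as the expected shift (concretely, $j+2 \leq n$ for the ``step up'' and $n+j+2 \leq 2n-1$ for the ``step down''), and this is exactly where the hypothesis $m \leq 2k$ — equivalently $j \leq k-1$, resp.\ $j \leq k-2$ — is consumed. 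The arithmetic is otherwise entirely routine, and the resulting identity is the computational heart of the forthcoming independence argument against the schema \ref{rule:mciCij} for $j = 2k$.
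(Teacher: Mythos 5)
Your proof is correct and follows essentially the same route as the paper's: the paper runs a strong induction on $m$ with a parity case split, which is just a repackaging of your mutual induction on $j$ for the even and odd closed forms, and the range checks ($2 \leq s+1 \leq \Suc{k}$ for the step up, $\Suc{k}+2 \leq \Suc{k}+1+s \leq 2\Suc{k}-1$ for the step down) are identical. Nothing further is needed.
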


\begin{lemma}
\label{lem:mcistrictinc}
For all $1 \leq k < \omega$, 
we have
$\SetFmlaCR{\mciHilbertName^{2\Suc{k}-1}} \mciCons\mciNeg^{2k}\mciCons\PropA$
but 
$\not\SetFmlaCR{\mciHilbertName^{2k-1}} \mciCons\mciNeg^{2k}\mciCons\PropA$.
\end{lemma}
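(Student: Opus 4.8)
The plan is to split the argument into a \emph{soundness part} and an \emph{independence part}, exactly as announced in the paragraph preceding the definition of $\MatA_k$. For the soundness part, I would fix $1 \leq k < \omega$ and verify that every axiom schema of $\mciHilbertName^{2k-1}$ is validated by $\MatA_k$, i.e.\ that $\SetFmlaMCR{\MatA_k}$ contains all instances of the schemas of positive classical logic together with \ref{rule:mciExcMid}, \ref{rule:mcibc}, \ref{rule:mciCi}, and \ref{rule:mciCij} for $0 \leq j \leq 2k-1$. The binary operations $\land,\lor,\mciImp$ on $\AlgA_k$ are just the two-valued classical tables collapsed onto the partition $\{1,\dots,\Suc{k}\} = \ValueSetComp{D_k}$ (``false'') and $\{\Suc{k}+1,\dots,2\Suc{k}\} = D_k$ (``true''), so every instance of a \CPL{} schema and of \ref{rule:mciExcMid} is designated; this is a routine check. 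For \ref{rule:mcibc} and \ref{rule:mciCi} one checks by cases on whether the relevant values lie in $D_k$; the key facts are that $\AlgInterp{\mciCons}{\AlgA_k}\ValA \in \ValueSetComp{D_k}$ iff $\ValA = 2\Suc{k}$, that $\AlgInterp{\mciNeg}{\AlgA_k}\ValA$ and $\ValA$ cannot both be in $D_k$ \emph{unless} $\ValA = 2\Suc{k}$ (since $\AlgInterp{\mciNeg}{\AlgA_k}(2\Suc{k}) = \Suc{k}+1 \in D_k$), and that when $\ValA = 2\Suc{k}$ we have $\AlgInterp{\mciCons}{\AlgA_k}\ValA = 1 \notin D_k$, which discharges \ref{rule:mcibc}; and that $\AlgInterp{\mciCons}{\AlgA_k}\ValA \notin D_k$ forces $\ValA = 2\Suc{k}$, whence $\AlgInterp{\mciNeg}{\AlgA_k}\AlgInterp{\mciCons}{\AlgA_k}\ValA = \AlgInterp{\mciNeg}{\AlgA_k}(1) = \Suc{k}+1 \in D_k$ while $\ValA \AlgInterp{\land}{\AlgA_k} \AlgInterp{\mciNeg}{\AlgA_k}\ValA = 2\Suc{k} \AlgInterp{\land}{\AlgA_k} (\Suc{k}+1) \in D_k$, discharging \ref{rule:mciCi}.

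The crux of the soundness part is \ref{rule:mciCij} for $0 \leq j \leq 2k-1$, i.e.\ showing $\AlgInterp{\mciCons}{\AlgA_k}\bigl(\AlgInterp{\mciNeg}{\AlgA_k}^{\,j}(\AlgInterp{\mciCons}{\AlgA_k}\ValA)\bigr) \in D_k$ for every $\ValA$. Here is where Lemma~\ref{lem:itneg} does the work: $\AlgInterp{\mciCons}{\AlgA_k}\ValA$ is always either $1$ or $\Suc{k}+1$; for $\ValA = 2\Suc{k}$ it is $1$ and then $\AlgInterp{\mciNeg}{\AlgA_k}^{\,j}(1)$ for $j \geq 1$ equals $\AlgInterp{\mciNeg}{\AlgA_k}^{\,j-1}(\Suc{k}+1)$, while for $j=0$ it is $1$ and $\AlgInterp{\mciCons}{\AlgA_k}(1) = \Suc{k}+1 \in D_k$; in the remaining case $\AlgInterp{\mciCons}{\AlgA_k}\ValA = \Suc{k}+1$. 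So in all cases it suffices to show $\AlgInterp{\mciCons}{\AlgA_k}\bigl(\AlgInterp{\mciNeg}{\AlgA_k}^{\,i}(\Suc{k}+1)\bigr) \in D_k$ for $0 \leq i \leq 2k-1$. By Lemma~\ref{lem:itneg}, $\AlgInterp{\mciNeg}{\AlgA_k}^{\,i}(\Suc{k}+1)$ runs through values of the form $(\Suc{k}+1) + \tfrac{i}{2}$ or $1 + \tfrac{i+1}{2}$, and the only value $v$ with $\AlgInterp{\mciCons}{\AlgA_k}v \notin D_k$ is $v = 2\Suc{k}$; one checks that $(\Suc{k}+1)+\tfrac{i}{2} = 2\Suc{k}$ needs $i = 2\Suc{k}$ and $1+\tfrac{i+1}{2} = 2\Suc{k}$ needs $i = 4k+1$ (using $\Suc{k} = k+1$), both strictly greater than $2k-1$, so for $i \leq 2k-1$ the value is never $2\Suc{k}$ and the consistency operator lands in $D_k$. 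This establishes soundness of $\mciHilbertName^{2k-1}$ with respect to $\MatA_k$, hence $\SetFmlaCR{\mciHilbertName^{2k-1}} \subseteq \SetFmlaMCR{\MatA_k}$, and in particular $\SetFmlaMCR{\MatA_k}$ contains only theorems that $\mciHilbertName^{2k-1}$ proves among $\SigMCI$-formulas of this shape.

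For the independence part, I would exhibit a valuation (equivalently, a choice of value for $\PropA$) refuting $\mciCons\mciNeg^{2k}\mciCons\PropA$ in $\MatA_k$. Take $\ValA = 2\Suc{k}$, so that $\AlgInterp{\mciCons}{\AlgA_k}\ValA = 1 \in \ValueSetComp{D_k}$; then $\AlgInterp{\mciNeg}{\AlgA_k}(1) = \Suc{k}+1$, and for the remaining $2k-1$ applications of $\mciNeg$ Lemma~\ref{lem:itneg} gives $\AlgInterp{\mciNeg}{\AlgA_k}^{\,2k}(2\Suc{k}) = \AlgInterp{\mciNeg}{\AlgA_k}^{\,2k-1}(\Suc{k}+1)$, which for the odd exponent $2k-1$ is $1 + \tfrac{(2k-1)+1}{2} = 1 + k = \Suc{k}$; since $\Suc{k} \in \ValueSetComp{D_k}$ we get $\AlgInterp{\mciCons}{\AlgA_k}(\Suc{k})$, and as $\Suc{k} \neq 2\Suc{k}$ (because $\Suc{k} = k+1 \geq 2 > 0$) this is $\Suc{k}+1$ --- wait, that lies in $D_k$, so I would instead aim the iterated negation at the value $2\Suc{k}$ itself: I want $\AlgInterp{\mciNeg}{\AlgA_k}^{\,2k}(\AlgInterp{\mciCons}{\AlgA_k}\ValA) = 2\Suc{k}$. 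Starting from $\AlgInterp{\mciCons}{\AlgA_k}\ValA = \Suc{k}+1$ (any $\ValA \neq 2\Suc{k}$ works, e.g.\ $\ValA = 1$), Lemma~\ref{lem:itneg} gives $\AlgInterp{\mciNeg}{\AlgA_k}^{\,2k}(\Suc{k}+1) = (\Suc{k}+1) + k = 2k+1 = 2\Suc{k}$ (again using $\Suc{k} = k+1$), whence $\AlgInterp{\mciCons}{\AlgA_k}\bigl(\AlgInterp{\mciNeg}{\AlgA_k}^{\,2k}(\AlgInterp{\mciCons}{\AlgA_k}(1))\bigr) = \AlgInterp{\mciCons}{\AlgA_k}(2\Suc{k}) = 1 \notin D_k$. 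So the valuation sending $\PropA$ to $1$ witnesses $\not\SetFmlaMCR{\MatA_k} \mciCons\mciNeg^{2k}\mciCons\PropA$, and by the soundness half $\not\SetFmlaCR{\mciHilbertName^{2k-1}} \mciCons\mciNeg^{2k}\mciCons\PropA$. The provability of $\mciCons\mciNeg^{2k}\mciCons\PropA$ in $\mciHilbertName^{2\Suc{k}-1}$ is immediate since $2k \leq 2\Suc{k}-1$, so this instance of \ref{rule:mciCij} is literally an axiom of that system. The main obstacle is the bookkeeping in Lemma~\ref{lem:itneg}'s arithmetic: one must be careful that the parity of the exponent and the off-by-one in the two branches of $\AlgInterp{\mciNeg}{\AlgA_k}$ line up so that exactly at exponent $2k$ (and not before) the orbit of $\Suc{k}+1$ hits the unique ``bad'' value $2\Suc{k}$ on which $\mciCons$ outputs a non-designated value --- this is precisely what makes the chain strictly increasing at each step and is the reason the even superscripts were skipped.
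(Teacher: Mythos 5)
Your proposal is correct and follows essentially the same route as the paper: the soundness of $\mciHilbertName^{2k-1}$ with respect to $\MatA_k$ (handling the positive fragment via the collapse onto the two-valued matrix and using Lemma~\ref{lem:itneg} to control the orbit of $\Suc{k}+1$ under $\mciNeg$ for the schemas \ref{rule:mciCij}), followed by the countermodel with $v(\PropA)=1$ for the independence half. One harmless arithmetic slip: the even branch $(\Suc{k}+1)+\frac{i}{2}$ reaches $2\Suc{k}$ at $i=2k$, not at $i=2\Suc{k}$ (as your own independence computation confirms); since $2k>2k-1$ the soundness conclusion is unaffected, but the two passages should be made consistent.
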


Finally, Theorem~\ref{the:nonfinaxiomat},
Lemma~\ref{lem:mcisequence}
and Lemma~\ref{lem:mcistrictinc}
give us the main result:

\begin{theorem}
	\label{the:mcinotfinsetfmla}
	\mciName{} is not axiomatizable by a finite
	\SetFmla{} H-system.
\end{theorem}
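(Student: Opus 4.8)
The plan is to assemble the three ingredients that have just been set up into a direct application of Wójcicki's criterion (Theorem~\ref{the:nonfinaxiomat}). First I would recall that $\SetFmlaCR{\mciName}$ is a standard Tarskian consequence relation: it is substitution-invariant and finitary because it is induced by the \SetFmla{} H-system $\mciHilbertName$, whose rules (positive classical logic together with \ref{rule:mciExcMid}, \ref{rule:mcibc}, \ref{rule:mciCi} and the schemas \ref{rule:mciCij}) are all schematic and finitary. Hence Theorem~\ref{the:nonfinaxiomat} applies, and to conclude that $\SetFmlaCR{\mciName}$ is \emph{not} finitely axiomatizable by a \SetFmla{} H-system it suffices to exhibit one strictly increasing sequence of standard Tarskian consequence relations whose supremum is $\SetFmlaCR{\mciName}$.

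Next I would take exactly the sequence $\SetFmlaCR{k} \SymbDef \SetFmlaCR{\mciHilbertName^{2k-1}}$ from Lemma~\ref{lem:mcisequence}. Each $\SetFmlaCR{k}$ is standard, being induced by the finite \SetFmla{} H-system $\mciHilbertName^{2k-1}$ of Definition~\ref{def:mcik}. Lemma~\ref{lem:mcisequence} already gives us that the chain is (weakly) increasing and that $\SetFmlaCR{\mciName} = \Supremum_{1 \leq k < \omega} \SetFmlaCR{k}$. It remains only to upgrade ``increasing'' to ``strictly increasing'': for each $k$ we need $\SetFmlaCR{k} \subsetneq \SetFmlaCR{k+1}$. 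This is precisely the content of Lemma~\ref{lem:mcistrictinc}, which exhibits the witness $\EmptySet \SetFmlaCR{\mciHilbertName^{2\Suc{k}-1}} \mciCons\mciNeg^{2k}\mciCons\PropA$ while $\EmptySet \not\SetFmlaCR{\mciHilbertName^{2k-1}} \mciCons\mciNeg^{2k}\mciCons\PropA$, so that the consecution $\Statement{\EmptySet}{\mciCons\mciNeg^{2k}\mciCons\PropA}$ separates $\SetFmlaCR{k+1}$ from $\SetFmlaCR{k}$. Combining this with the inclusion $\SetFmlaCR{k} \subseteq \SetFmlaCR{k+1}$ from Lemma~\ref{lem:mcisequence} gives proper inclusion at every stage, i.e.\ the sequence $\SetFmlaCR{1} \subsetneq \SetFmlaCR{2} \subsetneq \cdots$ is strictly increasing.

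Finally, I would invoke the ``only if'' direction of Theorem~\ref{the:nonfinaxiomat}: since we have produced a strictly increasing sequence of standard Tarskian consequence relations with supremum $\SetFmlaCR{\mciName}$, the relation $\SetFmlaCR{\mciName}$ cannot be axiomatized by a finite \SetFmla{} H-system, which is the claim.

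Honestly, at this point the proof of Theorem~\ref{the:mcinotfinsetfmla} is essentially bookkeeping — all the mathematical work has been pushed into Lemmas~\ref{lem:mcisequence} and \ref{lem:mcistrictinc}. The only subtlety worth a sentence is making sure the sequence is genuinely indexed by $\omega$ and that each member is standard (so that Theorem~\ref{the:nonfinaxiomat} is applicable in the stated form); the real obstacle, already discharged upstream, is the independence argument of Lemma~\ref{lem:mcistrictinc}, namely verifying that each matrix $\MatA_k$ validates every rule of $\mciHilbertName^{2k-1}$ yet refutes the instance of \ref{rule:mciCij} with $j = 2k$ — this is where Lemma~\ref{lem:itneg} on the orbit of $\Suc{k}+1$ under $\AlgInterp{\mciNeg}{\AlgA_k}$ does the heavy lifting.
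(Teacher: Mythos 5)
Your proposal is correct and follows exactly the paper's argument: Theorem~\ref{the:mcinotfinsetfmla} is obtained by combining Theorem~\ref{the:nonfinaxiomat} with Lemma~\ref{lem:mcisequence} (the increasing chain with supremum $\SetFmlaCR{\mciName}$) and Lemma~\ref{lem:mcistrictinc} (strictness of the chain). The bookkeeping you supply --- checking that each $\SetFmlaCR{k}$ is standard and that the separating consecution $\Statement{\EmptySet}{\mciCons\mciNeg^{2k}\mciCons\PropA}$ upgrades the inclusions to proper ones --- is precisely what the paper leaves implicit.
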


For the second part ---namely, that no finite \SetSet{} H-system 
axiomatizes \mciName{}---, 
we make use of the following result:

\begin{theorem}[\cite{ss1978}, Theorem 5.37, adapted]
	\label{the:disjunctionsetset}
	Let $\SetSetCR{}$ be a one-dimensional consequence relation
	over a propositional signature 
	containing the binary connective $\lor$.
	Suppose that 
	the \SetFmla{} Tarskian companion of $\SetSetCR{}$,
	denoted by
	$\SetFmlaRel{\SetSetCR{}}{}$, satisfies the following
	property:
	\begin{equation}
		\label{prop:disj}
		\tag{\upshape{$\mathsf{Disj}$}}
		\FmSetA, \FmA \lor \FmB \SetFmlaRel{\SetSetCR{}}{} \FmC
		\text{ if, and only if, }
		\FmSetA, \FmA \SetFmlaRel{\SetSetCR{}}{} \FmC
		\text{ and }
		\FmSetA, \FmB \SetFmlaRel{\SetSetCR{}}{} \FmC
	\end{equation}
	If a \SetSet{} H-system $\SSHCalcA$
	axiomatizes $\SetSetCR{}$,
	then $\SSHCalcA$ may be converted
	into a \SetFmla{} H-system for
	$\SetFmlaRel{\SetSetCR{}}{}$ that
	is finite whenever $\SSHCalcA$ is finite.
\end{theorem}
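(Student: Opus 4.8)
The plan is to exhibit an explicit translation of the \SetSet{} H-system $\SSHCalcA$ into a \SetFmla{} H-system $\SSHCalcA^{\vee}$ and then to prove $\SetFmlaCR{\SSHCalcA^{\vee}} = \SetFmlaRel{\SetSetCR{}}{}$. Since $\SSHCalcA$ is finitary, each of its rule schemas has the shape $\Statement{\FmSetA}{\{\delta_1, \ldots, \delta_n\}}$ with $n$ finite, and I would take $\SSHCalcA^{\vee}$ to consist of: (i)~the \SetFmla{} schema $\Statement{\FmSetA}{\delta_1 \lor (\cdots \lor \delta_n)}$ for each such schema with $n \geq 1$ (the $\delta_i$ associated, say, to the right); (ii)~the \SetFmla{} schema $\Statement{\FmSetA}{\PropA}$, with $\PropA$ a propositional variable not occurring in $\FmSetA$, for each such schema with $n = 0$; and (iii)~a fixed finite block of \SetFmla{} schemas witnessing that $\lor$ is a disjunction, namely the introduction schemas $\Statement{\{\PropA\}}{\PropA \lor \PropB}$ and $\Statement{\{\PropB\}}{\PropA \lor \PropB}$ together with the associativity, commutativity and idempotency schemas for $\lor$. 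Then $\SSHCalcA^{\vee}$ is finite whenever $\SSHCalcA$ is.

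For the inclusion $\SetFmlaCR{\SSHCalcA^{\vee}} \subseteq \SetFmlaRel{\SetSetCR{}}{}$, note that $\SetFmlaRel{\SetSetCR{}}{}$ is a substitution-invariant Tarskian consequence relation and that $\SetFmlaCR{\SSHCalcA^{\vee}}$ is the least such containing the rules of $\SSHCalcA^{\vee}$, so it suffices to check that every rule instance of $\SSHCalcA^{\vee}$ holds in $\SetFmlaRel{\SetSetCR{}}{}$. The type-(iii) instances are immediate from property~\ref{prop:disj}, discharging the right-hand side of its biconditional by overlap; iterating the introduction principle, a short induction on $n$ also yields $\delta_i \SetSetCR{} \{\delta_1 \lor (\cdots \lor \delta_n)\}$ for every $i$, a fact used below. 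A type-(ii) instance $\Statement{\FmSetA}{\PropA}$ comes from an instance $\Statement{\FmSetA}{\EmptySet}$ of $\SSHCalcA$, hence $\FmSetA \SetSetCR{} \EmptySet$, and dilution gives $\FmSetA \SetSetCR{} \{\PropA\}$, i.e.\ $\FmSetA \SetFmlaRel{\SetSetCR{}}{} \PropA$. For a type-(i) instance, $n$ applications of cut to the valid statement $\FmSetA \SetSetCR{} \{\delta_1, \ldots, \delta_n\}$ and to the statements $\delta_i \SetSetCR{} \{\delta_1 \lor (\cdots \lor \delta_n)\}$ produce $\FmSetA \SetSetCR{} \{\delta_1 \lor (\cdots \lor \delta_n)\}$, i.e.\ $\FmSetA \SetFmlaRel{\SetSetCR{}}{} \delta_1 \lor (\cdots \lor \delta_n)$.

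For the converse inclusion, assume $\FmSetA \SetFmlaRel{\SetSetCR{}}{} \FmC$, so $\FmSetA \SetSetCR{} \{\FmC\}$; since $\SSHCalcA$ axiomatizes $\SetSetCR{}$, there is a finite $\{\FmC\}$-closed $\SSHCalcA$-derivation $\SSHTreeA$ whose root is labelled by a subset of $\FmSetA$. The plan is to convert $\SSHTreeA$ into an $\SSHCalcA^{\vee}$-derivation of $\FmC$ from $\FmSetA$ by an induction over the tree establishing, for every node $\SSHNodeA$ of $\SSHTreeA$ other than the $\StarLabel$-leaves, that $\LabelFn{\SSHTreeA}(\SSHNodeA) \SetFmlaCR{\SSHCalcA^{\vee}} \FmC$ --- applied to the root and then diluted, this gives $\FmSetA \SetFmlaCR{\SSHCalcA^{\vee}} \FmC$, as desired. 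A $\{\FmC\}$-closed leaf that is not a $\StarLabel$ has $\FmC$ in its label, so the claim holds by overlap. A node whose unique child is a $\StarLabel$ was expanded by a rule instance $\Statement{\FmSetB}{\EmptySet}$ with $\FmSetB \subseteq \LabelFn{\SSHTreeA}(\SSHNodeA)$, and the matching type-(ii) schema, with its fresh conclusion variable instantiated to $\FmC$, gives $\LabelFn{\SSHTreeA}(\SSHNodeA) \SetFmlaCR{\SSHCalcA^{\vee}} \FmC$. Finally, a node $\SSHNodeA$ expanded by a rule instance $\Statement{\FmSetB}{\{\delta_1, \ldots, \delta_n\}}$ with $n \geq 1$ has children labelled $\LabelFn{\SSHTreeA}(\SSHNodeA) \cup \{\delta_i\}$, so the induction hypothesis gives $\LabelFn{\SSHTreeA}(\SSHNodeA), \delta_i \SetFmlaCR{\SSHCalcA^{\vee}} \FmC$ for each $i$, while the matching type-(i) schema gives $\LabelFn{\SSHTreeA}(\SSHNodeA) \SetFmlaCR{\SSHCalcA^{\vee}} \delta_1 \lor (\cdots \lor \delta_n)$; what then remains is to combine these into $\LabelFn{\SSHTreeA}(\SSHNodeA) \SetFmlaCR{\SSHCalcA^{\vee}} \FmC$ by reasoning by cases on the disjunction.

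This last combination is the crux, and the place where property~\ref{prop:disj} is indispensable: a \SetFmla{} H-system cannot branch, so the case split performed by $\SSHTreeA$ at such a node must be internalized through the disjunction $\delta_1 \lor (\cdots \lor \delta_n)$. The plan is to isolate, as a preliminary lemma, that $\SetFmlaCR{\SSHCalcA^{\vee}}$ itself admits disjunction elimination --- from $\FmSetB, \FmA \SetFmlaCR{\SSHCalcA^{\vee}} \FmC$ and $\FmSetB, \FmB \SetFmlaCR{\SSHCalcA^{\vee}} \FmC$ one obtains $\FmSetB, \FmA \lor \FmB \SetFmlaCR{\SSHCalcA^{\vee}} \FmC$ --- using the type-(iii) schemas together with the structural behaviour of $\SetFmlaCR{\SSHCalcA^{\vee}}$; granting it, iterated disjunction elimination followed by one cut closes the inductive step. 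I expect the main obstacle to be exactly this preliminary lemma --- pinning down a finite block of $\lor$-schemas that provably suffices for it, and I would not be surprised if getting it to go through forces one either to add a few more finitely-many $\lor$-principles to $\SSHCalcA^{\vee}$ or to replace the naive induction above by a simultaneous one. The guiding fact, itself derivable from property~\ref{prop:disj} together with the structural properties \ref{prop:CRSSPropO}, \ref{prop:CRSSPropD} and \ref{prop:CRSSPropC} of $\SetSetCR{}$, is that $\lor$ is already a genuine disjunction at the level of $\SetSetCR{}$ (i.e.\ $\FmSetA \SetSetCR{} \FmSetB \cup \{\FmA \lor \FmB\}$ iff $\FmSetA \SetSetCR{} \FmSetB \cup \{\FmA, \FmB\}$), which guarantees that every \SetFmla{} disjunction principle we invoke is sound for $\SetFmlaRel{\SetSetCR{}}{}$ and keeps the translation faithful; this is precisely the point at which the argument specializes the general machinery of~\cite{ss1978}.
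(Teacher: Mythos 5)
The paper offers no proof of this theorem to compare against: it is imported verbatim (``adapted'') from Shoesmith and Smiley~\cite{ss1978}, so your attempt has to be judged on its own terms. Your soundness half is correct, and the tree-induction skeleton for the converse is the right shape, but the step you yourself single out as the crux --- deriving proof by cases for $\SetFmlaCR{\SSHCalcA^{\vee}}$ from the naive translation $\Statement{\FmSetB}{\delta_1\lor(\cdots\lor\delta_n)}$ plus a block of \emph{pure} $\lor$-schemas --- is a genuine gap, and it cannot be closed either by adding further pure $\lor$-principles or by reorganizing the induction. The obstruction is structural: a \SetFmla{} H-system can only fire a rule whose antecedent formulas are literally present among the derived formulas, so once the simulation has produced $\delta_1\lor(\cdots\lor\delta_n)$ it has no means of continuing to apply rules ``inside'' the individual disjuncts $\delta_i$, which is precisely what the branching \SetSet{} derivation does below that node.

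Concretely, take a unary connective $g$ and let $\SSHCalcA$ consist of the \SetSet{} schemas $\Statement{\{\PropA\}}{\{\PropA\lor\PropB\}}$, $\Statement{\{\PropB\}}{\{\PropA\lor\PropB\}}$, $\Statement{\{\PropA\lor\PropB\}}{\{\PropA,\PropB\}}$ and $\Statement{\{\PropA\}}{\{g(\PropA)\}}$. One checks that (\ref{prop:disj}) holds for the Tarskian companion, whence $\PropA\lor\PropB\SetFmlaRel{\SetSetCR{}}{}g(\PropA)\lor g(\PropB)$. Your $\SSHCalcA^{\vee}$, however, turns the elimination schema into the vacuous $\Statement{\{\PropA\lor\PropB\}}{\PropA\lor\PropB}$, and a simple invariant (every formula derivable from $\PropA\lor\PropB$ either has both $\PropA$ and $\PropB$ among its outermost disjuncts, or has an outermost disjunct of the form $g(\chi)$ with $\chi$ again satisfying the invariant) is preserved by $\lor$-introduction, by $\Statement{\{\PropA\}}{g(\PropA)}$, and by associativity, commutativity and idempotency, yet fails for $g(\PropA)\lor g(\PropB)$; so your preliminary disjunction-elimination lemma is false for the system as defined, and the defect lies in the shape of the translated rules rather than in which $\lor$-principles accompany them. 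The repair --- and the actual content of Shoesmith--Smiley's construction --- is to replace each \SetSet{} schema $\Statement{\FmSetB}{\{\delta_1,\ldots,\delta_n\}}$ by its \emph{disjunctive form} $\Statement{\{\beta\lor\PropC\mid\beta\in\FmSetB\}}{(\delta_1\lor(\cdots\lor\delta_n))\lor\PropC}$ (and by $\Statement{\{\beta\lor\PropC\mid\beta\in\FmSetB\}}{\PropC}$ when $n=0$), with $\PropC$ a fresh variable: the side disjunct $\PropC$ carries, during the simulation of the tree, the disjunction of the conclusions of the still-open sibling branches, and (\ref{prop:disj}) is exactly what makes these modified rules sound for $\SetFmlaRel{\SetSetCR{}}{}$. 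With that change the remainder of your argument goes through and finiteness is preserved.
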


It turns out that:

\begin{lemma}
	\label{lem:mcihasdisj}
	\mciName{} satisfies \upshape{(\ref{prop:disj})}.
\end{lemma}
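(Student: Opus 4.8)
The plan is to verify the property~(\ref{prop:disj}) for \mciName{} directly, working with its \SetFmla{} Tarskian consequence relation. The left-to-right implication is the easy half: since $\mciName{}$ extends \CPL{}, we have $\FmA \SetFmlaRel{\mciName{}}{} \FmA \lor \FmB$ and $\FmB \SetFmlaRel{\mciName{}}{} \FmA \lor \FmB$, so by monotonicity and cut (transitivity) of the Tarskian relation, $\FmSetA, \FmA \lor \FmB \SetFmlaRel{\mciName{}}{} \FmC$ together with either $\FmSetA, \FmA \SetFmlaRel{\mciName{}}{} \FmA\lor\FmB$ yields $\FmSetA, \FmA \SetFmlaRel{\mciName{}}{} \FmC$, and likewise for $\FmB$.

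The right-to-left implication is the substantive one. The natural route is to use the deduction-detachment theorem available in \mciName{} (inherited from \CPL{}, since $\mciImp$ behaves classically on the positive fragment): from $\FmSetA, \FmA \SetFmlaRel{\mciName{}}{} \FmC$ we obtain $\FmSetA \SetFmlaRel{\mciName{}}{} \FmA \mciImp \FmC$, and from $\FmSetA, \FmB \SetFmlaRel{\mciName{}}{} \FmC$ we get $\FmSetA \SetFmlaRel{\mciName{}}{} \FmB \mciImp \FmC$. Then I would invoke the classical tautology $(\FmA \mciImp \FmC) \mciImp ((\FmB \mciImp \FmC) \mciImp ((\FmA \lor \FmB) \mciImp \FmC))$, which is provable already in \CPL{} and hence in \mciName{}, and apply modus ponens twice to conclude $\FmSetA \SetFmlaRel{\mciName{}}{} (\FmA \lor \FmB) \mciImp \FmC$; adding $\FmA \lor \FmB$ to the antecedent and detaching gives $\FmSetA, \FmA \lor \FmB \SetFmlaRel{\mciName{}}{} \FmC$, as desired.

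The only point requiring care — and the main (minor) obstacle — is making sure that the deduction theorem and the relevant positive-classical tautologies are genuinely available in \mciName{}, i.e.\ that the extra axioms \ref{rule:mciExcMid}, \ref{rule:mcibc}, \ref{rule:mciCi}, \ref{rule:mciCij} do not disturb the behaviour of $\lor$ and $\mciImp$ on which the argument rests. Since $\mciHilbertName$ is obtained from a \SetFmla{} H-system for \CPL{} purely by \emph{adding} axiom schemas (no rules beyond modus ponens, and no restriction on the \CPL{} part), the deduction-detachment theorem for $\mciImp$ and all \CPL{}-theorems carry over unchanged. I would state this explicitly as the one ingredient being used and then the verification of~(\ref{prop:disj}) is a routine consequence. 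This establishes \autoref{lem:mcihasdisj}, which in combination with \autoref{the:disjunctionsetset} and \autoref{the:mcinotfinsetfmla} will yield that \mciName{} is not axiomatizable by a finite \SetSet{} H-system either.
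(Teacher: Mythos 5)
Your proof is correct, but it takes a genuinely different route from the paper's. The paper argues semantically and in the \SetSet{} setting: it observes that in the characteristic nd-matrix one has $\FmA \SetSetCR{\mciMatrixName} \FmA\lor\FmB$, $\FmB \SetSetCR{\mciMatrixName} \FmA\lor\FmB$ and, crucially, the multiple-conclusion consecution $\FmA\lor\FmB \SetSetCR{\mciMatrixName} \FmA,\FmB$; property (\ref{prop:disj}) for the \SetFmla{} companion then follows by dilution and cut, with no appeal to a deduction theorem. You instead work syntactically inside $\mciHilbertName$, deriving the right-to-left direction from the deduction-detachment theorem for $\mciImp$ together with the \CPL{}-tautology $(\FmA\mciImp\FmC)\mciImp((\FmB\mciImp\FmC)\mciImp((\FmA\lor\FmB)\mciImp\FmC))$. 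Your key supporting observation --- that the deduction theorem survives the addition of axiom schemas when modus ponens remains the only rule --- is the standard argument and is correct for \mciName{}, so the proof goes through. The trade-off: the semantic argument is a one-liner and is independent of how the positive fragment happens to be axiomatized, whereas your argument is self-contained on the proof-theoretic side; since the lemma is ultimately applied to the \SetSet{} consequence relation of \mciName{} in Theorem~\ref{the:mcinotfinsetset}, both versions implicitly rely on the completeness of $\mciHilbertName$ with respect to $\mciMatrixName$ to identify the two presentations of the logic.
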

\begin{proof}
	The non-deterministic
	semantics of \mciName{}
	gives us that,
	for all $\FmA,\FmB \in \LangSet{\SigMCI}{\PropSetA}$,
	$\FmA \SetSetCR{\mciMatrixName} \FmA\lor\FmB$;
	$\FmB \SetSetCR{\mciMatrixName} \FmA\lor\FmB$,
	and
	$\FmA \lor \FmB \SetSetCR{\mciMatrixName} \FmA, \FmB$,
	and such facts easily imply
	(\ref{prop:disj}).
\end{proof}

\begin{theorem}
	\label{the:mcinotfinsetset}
	\mciName{} is not axiomatizable by a finite
	\SetSet{} H-system.
\end{theorem}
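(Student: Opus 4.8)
The plan is to combine the previous two results: Theorem~\ref{the:mcinotfinsetfmla} tells us that \mciName{} has no finite \SetFmla{} axiomatization, and Theorem~\ref{the:disjunctionsetset} tells us that, for any one-dimensional consequence relation satisfying~(\ref{prop:disj}), a finite \SetSet{} axiomatization can be converted into a finite \SetFmla{} axiomatization of its \SetFmla{} Tarskian companion. So I would argue by contraposition: suppose, for the sake of contradiction, that there is a finite \SetSet{} H-system $\SSHCalcA$ axiomatizing \mciName{} (viewed as the one-dimensional consequence relation $\SetSetCR{\mciMatrixName}$ determined by the $5$-valued nd-matrix of Definition~\ref{def:avrons-five-valued-matrix}, equivalently the least \SetSet{} logic extending the Hilbert calculus $\mciHilbertName$).

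First I would invoke Lemma~\ref{lem:mcihasdisj}, which establishes that \mciName{} satisfies~(\ref{prop:disj}); this is exactly the hypothesis required to feed $\SetSetCR{\mciMatrixName}$ into Theorem~\ref{the:disjunctionsetset}. Applying that theorem to the assumed finite \SetSet{} H-system $\SSHCalcA$ yields a \SetFmla{} H-system $\SSHCalcA'$ for the \SetFmla{} Tarskian companion $\SetFmlaRel{\SetSetCR{\mciMatrixName}}{}$, and $\SSHCalcA'$ is finite because $\SSHCalcA$ is. The one small point to check here is that $\SetFmlaRel{\SetSetCR{\mciMatrixName}}{}$ is indeed the standard Tarskian consequence relation that we have been calling $\SetFmlaCR{\mciName}$ — that is, that restricting the \SetSet{} consequence relation of \mciName{} to single-formula succedents gives back the original Tarskian \mciName{}. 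This follows from the fact that $\mciHilbertName$ (a \SetFmla{} system) and the nd-matrix $\mciMatrixName$ present the same Tarskian logic, together with the definition of the \SetFmla{} companion; it is essentially bookkeeping, but worth stating explicitly.

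Having obtained a finite \SetFmla{} H-system for $\SetFmlaCR{\mciName}$, I would then directly contradict Theorem~\ref{the:mcinotfinsetfmla}, which asserts precisely that no such finite \SetFmla{} H-system exists. This closes the argument. I do not anticipate a genuine obstacle in this last step — the whole difficulty of the section was front-loaded into Theorem~\ref{the:mcinotfinsetfmla} (the construction of the strictly increasing chain $\SetFmlaCR{\mciHilbertName^{2k-1}}$ via the matrices $\MatA_k$ and the independence argument of Lemma~\ref{lem:mcistrictinc}) and into checking the disjunction property. The mild care needed here is only to make sure the hypotheses of Theorem~\ref{the:disjunctionsetset} are literally met: the signature $\SigMCI$ does contain the binary connective $\lor$, the relation $\SetSetCR{\mciMatrixName}$ is a bona fide one-dimensional consequence relation, and~(\ref{prop:disj}) holds by Lemma~\ref{lem:mcihasdisj}; so the conversion applies verbatim, and the resulting finite \SetFmla{} system is the desired contradiction.
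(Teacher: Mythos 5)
Your proposal is correct and follows exactly the paper's own argument: assume a finite \SetSet{} H-system for \mciName{}, use Lemma~\ref{lem:mcihasdisj} to verify~(\ref{prop:disj}), apply Theorem~\ref{the:disjunctionsetset} to extract a finite \SetFmla{} H-system, and contradict Theorem~\ref{the:mcinotfinsetfmla}. The extra bookkeeping remark about the \SetFmla{} Tarskian companion coinciding with $\SetFmlaCR{\mciName}$ is a reasonable point of care but does not change the route.
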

\begin{proof}
	If $\SSHCalcA$ were
	a finite \SetSet{} H-system
	for \mciName{},
	then, by Lemma~\ref{lem:mcihasdisj}
	and Theorem~\ref{the:disjunctionsetset},
	it could be turned into a 
	finite \SetFmla{} H-system for this very logic.
	This would contradict Theorem~\ref{the:mcinotfinsetfmla}.
\end{proof}

Finding a finite one-dimensional H-system for \mciName{} (analytic or not)
over the same language, 
then,
proved to be impossible. The previous result also tells
us that there is no sufficiently expressive
non-deterministic matrix that characterizes \mciName{}
(for otherwise the recipe in~\cite{marcelino19woll}
would deliver a finite analytic \SetSet{} H-system for it), and we may conclude, in particular, that:

\begin{corollary}
	\label{coro:mcinotsuff}
	The nd-matrix $\mciMatrixName$ is
	not sufficiently expressive.
\end{corollary}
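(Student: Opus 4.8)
The plan is to obtain the corollary by contraposition from Theorem~\ref{the:mcinotfinsetset}, using the automatic axiomatization result of~\cite{marcelino19woll}. First I would record the two facts we are entitled to assume: (i) $\mciMatrixName$ is finite; and (ii) $\mciMatrixName$ is a characteristic nd-matrix for \mciName{}, in the precise sense that the \SetFmla{} Tarskian companion of $\SetSetMCR{\mciMatrixName}$ coincides with $\SetFmlaCR{\mciName}$ --- this is exactly the semantic presentation of \mciName{} coming from~\cite{avron2008}, the same one invoked, for instance, in the proof of Lemma~\ref{lem:mcihasdisj}.

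Then I would argue as follows. Suppose, toward a contradiction, that $\mciMatrixName$ is sufficiently expressive, i.e., that every pair of distinct elements of $\FiveValuesSetName$ is separable in $\mciMatrixName$ by some unary $\SigMCI$-formula. Since $\mciMatrixName$ is also finite, the recipe of~\cite{marcelino19woll} yields a finite (indeed $\BAnaSetA$-analytic, where $\BAnaSetA$ collects the relevant separators) \SetSet{} H-system $\SSHCalcA$ that axiomatizes $\SetSetMCR{\mciMatrixName}$. Passing to \SetFmla{} companions, $\SSHCalcA$ is then a finite \SetSet{} H-system whose induced one-dimensional logic has \mciName{} as its \SetFmla{} companion --- that is, a finite \SetSet{} H-system for \mciName{} in the sense of Theorem~\ref{the:mcinotfinsetset}. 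This contradicts that theorem, so $\mciMatrixName$ cannot be sufficiently expressive.

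I do not expect a genuine obstacle here, since the statement is essentially a bookkeeping consequence of results already in place; the only point deserving a moment's care is to align the two notions of ``finite \SetSet{} H-system for \mciName{}'' --- the one produced by~\cite{marcelino19woll} (a system axiomatizing the \SetSet{} relation $\SetSetMCR{\mciMatrixName}$) and the one negated by Theorem~\ref{the:mcinotfinsetset} --- so that the clash is literal. This is handled by fact (ii) above, namely that $\SetSetMCR{\mciMatrixName}$ has \mciName{} as its \SetFmla{} companion, together with the observation that axiomatizing a one-dimensional consequence relation also axiomatizes its \SetFmla{} Tarskian companion.
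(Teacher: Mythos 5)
Your argument is exactly the paper's: the corollary is obtained by contraposition from Theorem~\ref{the:mcinotfinsetset}, observing that if $\mciMatrixName$ were sufficiently expressive the recipe of~\cite{marcelino19woll} would yield a finite analytic \SetSet{} H-system for \mciName{}. Your extra care in aligning the \SetSet{} relation $\SetSetMCR{\mciMatrixName}$ with \mciName{} as its \SetFmla{} companion is a sensible explicit bookkeeping step that the paper leaves implicit.
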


The pairs of truth-values of $\mciMatrixName$ that seem not to be separable (at least one of these pairs must not be, in view of the  above corollary) 
are $\Pair{\mcit}{\mciT}$
and $\Pair{\mcif}{\mciF}$.
The insufficiency of expressive power to take these specific pairs of values apart, however, would be circumvented if we had considered instead the matrix defined below, obtained
from $\mciMatrixName$ by changing
its set of designated values:

\begin{definition}
\label{def:mci-f-preserving}
    Let $\mciRejMatrixName \SymbDef 
    \Struct{\mciAlgebraName, \RejSetFive}$,
    where 
    $\RejSetFive \SymbDef \Set{\mcif, \mciI, \mciT}$.
\end{definition}
%
Note that, in $\mciRejMatrixName$, we have $\mcit \not\in \RejSetFive$,
while $\mciT \in \RejSetFive$,
and we have that $\mcif \in \RejSetFive$, while $\mciF \not\in \RejSetFive$. Therefore, 
the single propositional variable
$\PropA$ separates in $\mciRejMatrixName$ the pairs
$\Pair{\mcit}{\mciT}$
and $\Pair{\mcif}{\mciF}$.
On the other hand,
it is not clear now whether
the pairs $\Pair{\mcit}{\mciF}$ and $\Pair{\mcif}{\mciT}$ are separable in this new matrix.
%
Nonetheless, we will see, in the next section,
how we can take advantage
of the semantics of non-deterministic \TheB-matrices in order to 
combine the expressiveness of 
$\mciMatrixName$ and $\mciRejMatrixName$ 
in a very simple and intuitive manner, preserving
the language and the algebra shared by these matrices.
The notion of logic induced by the resulting structure will not
be one-dimensional, as the one presented before, but
rather two-dimensional, in a sense we shall detail
in a moment. We identify two important aspects
of this combination: first, the logics determined
by the original matrices can be fully recovered from the combined logic;
and, second, since the notions of H-systems and sufficient expressiveness,
as well as the axiomatization algorithm of~\cite{marcelino19woll},
were generalized in~\cite{greati2021}, the resulting two-dimensional logic
may be algorithmically axiomatized by an \emph{analytic} two-dimensional
H-system that is \emph{finite} if the combining matrices are finite,
provided the criterion of sufficient expressiveness is satisfied
after the combination. 
This will be the case, in particular, when we combine
$\mciMatrixName$ and $\mciRejMatrixName$.
Consequently, this novel way of
combining logics provides a quite general approach for producing finite and analytic
axiomatizations for logics determined by non-deterministic logical matrices 
that fail to be 
finitely axiomatizable in one dimension;
this includes the logics from Example~\ref{ex:three-valued-non-ax-set-set},
and also \mciName{}.

\section{Two-dimensional logics}
\vspace{-.5em}
\label{sec:two-dim}
From now on, we will employ the symbols
$\Acc$, $\NAcc$, $\Rej$ and $\NRej$ to informally refer to, respectively, the cognitive attitudes of
\emph{acceptance}, \emph{non-acceptance},
\emph{rejection} and \emph{non-rejection}, 
collected in the set $\CogsAttSet \SymbDef \Set{\Acc,\NAcc,\Rej,\NRej}$.
Given a set $\FmSetA \subseteq \LangSet{\SigA}{\PropSetA}$,
we will write $\FmSetA_{\CogVarA}$ to intuitively mean
that a given agent entertains the cognitive attitude
$\CogVarA \in \CogsAttSet$ with respect to the formulas in $\FmSetA$, 
that is: the formulas in $\FmSetA_\Acc$
will be understood as being accepted by the agent; 
the ones in $\FmSetA_\NAcc$, as non-accepted;
the ones in $\FmSetA_\Rej$, as rejected; and
the ones in $\FmSetA_\NRej$, as non-rejected.
Where $\CogVarA \in \CogsAttSet$, we let
$\InvCog{\CogVarA}$ be its flipped version,
that is,
$\InvCog{\Acc} \SymbDef \NAcc$,
$\InvCog{\NAcc} \SymbDef \Acc$,
$\InvCog{\Rej} \SymbDef \NRej$ and
$\InvCog{\NRej} \SymbDef \Rej$.

We refer to each
$\BStat[]{\CtxAccA}{\CtxNRejA}{\CtxRejA}{\CtxNAccA} \in \PowerSet{\LangSet{\SigA}{\PropSetA}}^2 \times \PowerSet{\LangSet{\SigA}{\PropSetA}}^2$ as a \emph{\TheB-statement},
where $\Pair{\CtxAccA}{\CtxRejA}$ is the \DefEmph{antecedent} and
$\Pair{\CtxNAccA}{\CtxNRejA}$ is the \DefEmph{succedent}.
The sets in the latter pairs are called \DefEmph{components}.
A \TheB\DefEmph{-consequence relation} 
is a collection $\BConName{}$ of \TheB-statements satisfying:

\begin{description}[labelindent=0.5cm, labelwidth=1cm]\itemsep1pt
	\item[\namedlabel{prop:BConO}\BConO] 
		if	$\CtxAccA \cap \CtxNAccA \neq \EmptySet$
		or 
			$\CtxRejA \cap \CtxNRejA \neq \EmptySet$,
			then $\BCon[]{\CtxAccA}{\CtxNRejA}{\CtxRejA}{\CtxNAccA}$
	\item[\namedlabel{prop:BConD}\BConD]
		if $\BCon[]{\CtxAccB}{\CtxNRejB}{\CtxRejB}{\CtxNAccB}$ and 	$\CogSet{\FmSetB}{\CogVarA} \subseteq \CogSet{\FmSetA}{\CogVarA}$
		for every $\CogVarA \in \CogsAttSet$,
		then
		$\BCon[]{\CtxAccA}{\CtxNRejA}{\CtxRejA}{\CtxNAccA}$
	\item[\namedlabel{prop:BConC}\BConC] 
		if
			$\BCon[]{\IntermAccSet{\CutPropSetA}}{\FmSetCompl{\IntermRejSet{\CutPropSetA}}}{\IntermRejSet{\CutPropSetA}}{\FmSetCompl{\IntermAccSet{\CutPropSetA}}}$ 
		for all 
			$\CtxAccA \subseteq \IntermAccSet{\CutPropSetA} \subseteq \FmSetCompl{\CtxNAccA}$
		and
			$\CtxRejA \subseteq \IntermRejSet{\CutPropSetA} \subseteq \FmSetCompl{\CtxNRejA}$,
			then $\BCon[]{\CtxAccA}{\CtxNRejA}{\CtxRejA}{\CtxNAccA}$
\end{description}

\noindent A \TheB-consequence relation is called \DefEmph{substitution-invariant} if, in addition, 
    $\BCon[]{\CtxAccA}{\CtxNRejA}{\CtxRejA}{\CtxNAccA}$ holds
whenever, for every $\SubstA\in\SubsSet{\SigA}{}$:
\begin{description}[labelindent=0.5cm, labelwidth=1cm]\itemsep1pt
	\item[\namedlabel{prop:BConS}\BConS]
		$\BCon[]{\CtxAccB}{\CtxNRejB}{\CtxRejB}{\CtxNAccB}$
		and
		$\CogSet{\FmSetA}{\CogVarA} = \SubsApply{\SubstA}{\CogSet{\FmSetB}{\CogVarA}}$
		for every $\CogVarA \in \CogsAttSet$ 
		%
\end{description}

\noindent Moreover, a \TheB-consequence relation is called \DefEmph{finitary}
when it enjoys the property

\begin{description}[labelindent=0.5cm, labelwidth=1cm]\itemsep1pt
    \item[\namedlabel{prop:BConF}\BConF]
    if
    $\BCon[]{\CtxAccA}{\CtxNRejA}{\CtxRejA}{\CtxNAccA}$,
    then
    	$\BCon[]
    	{\FinSubset{\CtxAccA}}
    	{\FinSubset{\CtxNRejA}}
    	{\FinSubset{\CtxRejA}}
    	{\FinSubset{\CtxNAccA}}$, 
	for some finite 
		$\FinSubset{\CogSet{\FmSetA}{\CogVarA}} \subseteq \CogSet{\FmSetA}{\CogVarA}$,
	and each $\CogVarA \in \CogsAttSet$    
\end{description}
In what follows, \TheB-consequence
relations will also be referred
to as \DefEmph{two-dimen\-sion\-al logics}.
The complement of $\BConName$, sometimes called the \emph{compatibility relation associated with} $\BConName$~\cite{blasio2021}, will be denoted by~$\nBConName$.
%
Every \TheB-consequence relation $\BConNameLetter \SymbDef \BConName$
induces one-dimensional consequence relations
$\SetSetRel{\tAsp}{\BConNameLetter}$
and $\SetSetRel{\fAsp}{\BConNameLetter}$,
such that
$\CtxAccA \SetSetRel{\tAsp}{\BConNameLetter} 
\CtxNAccA$ 
iff
$\BCon[]{\CtxAccA}{\EmptySet}{\EmptySet}{\CtxNAccA}$,
and
$\CtxRejA \SetSetRel{\fAsp}{\BConNameLetter} 
\CtxNRejA$ 
iff
$\BCon[]{\EmptySet}{\CtxNRejA}{\CtxRejA}{\EmptySet}$.
Given a one-dimensional consequence relation $\SetSetCR{}$,
we say that it
\emph{inhabits the $\tAsp$-aspect of $\BConNameLetter$}
if $\SetSetCR{} = \SetSetRel{\tAsp}{\BConNameLetter}$,
and that it
\emph{inhabits the $\fAsp$-aspect of $\BConNameLetter$}
if $\SetSetCR{} = \SetSetRel{\fAsp}{\BConNameLetter}$.
\TheB-consequence relations
actually induce many other (even non-Tarskian) 
one-dimensional notions of logics;
the reader is referred to~\cite{blasio20171,blasiomarcos2017} for a
thorough presentation on this topic.

As we did for one-dimensional consequence relations,
we present now realizations of \TheB-consequence relations, first via the semantics of nd-\TheB-matrices, then
by means of two-dimensional H-systems.


A \DefEmph{non-deterministic \TheB-matrix over $\SigA$},
or simply \DefEmph{$\SigA$-nd-\TheB-matrix},
is a structure $\BMatA \SymbDef \Struct{\AlgA, \BMatDesSetA{}, \BMatAntiDesSetA{}}$, where $\AlgA$ is a $\SigA$-nd-algebra,
$\BMatDesSetA{} \subseteq \ValSetA$ is the
set of \DefEmph{designated values} 
and $\BMatAntiDesSetA{} \subseteq \ValSetA$
is the set of \DefEmph{antidesignated values}
of $\BMatA$. For convenience, we define
$\BMatNDesSetA{} \SymbDef \SetDiff{\ValSetA}{\BMatDesSetA{}}$ to be
the set of \DefEmph{non-designated values},
and $\BMatNAntiDesSetA{} \SymbDef \SetDiff{\ValSetA}{\BMatAntiDesSetA{}}$ to be the set of \DefEmph{non-antidesignated values}
of $\BMatA$.
The elements of $\ValSet{\SigA}{\AlgA}$ are dubbed
\DefEmph{$\BMatA$-valuations}.
The
\DefEmph{\TheB-entailment relation determined by $\BMatA$}
 is a collection $\BEntName{\BMatA}$ of \TheB-statements
such that
\begin{description}[labelindent=.5cm, labelwidth=1cm]\itemsep1pt    
\item[\namedlabel{prop:BentDef}\BEntProp]
    \begin{tabular}{lcl}
    $\BEnt[d]{\CtxAccA}{\CtxNRejA}{\CtxRejA}
    {\CtxNAccA}{\BMatA}$
        & \ \ if{f}\ \ &
        \begin{tabular}{l}
    there is no $\BMatA$-valuation
    $\ValuationA$ such that
\\
        $\ValuationA(\CogSet{\FmSetA}{\CogVarA})
        \subseteq \BMatDistSet{\CogVarA}{\BMatA}$
        for each $\CogVarA \in \CogsAttSet$,
        \end{tabular}
    \end{tabular}
\end{description}
for every $\CtxAccA, \CtxRejA, \CtxNAccA, \CtxNRejA \subseteq \LangSet{\SigA}{\PropSetA}$.
Whenever $\BEnt[]{\CtxAccA}{\CtxNRejA}{\CtxRejA}{\CtxNAccA}{\BMatA}$, we say that
the \TheB-statement $\BStat[]{\CtxAccA}{\CtxNRejA}{\CtxRejA}{\CtxNAccA}$
\emph{holds in $\BMatA$} or \emph{is valid in $\BMatA$}.
An $\BMatA$-valuation that bears witness to
$\nBEnt[]{\CtxAccA}{\CtxNRejA}{\CtxRejA}{\CtxNAccA}{\BMatA}$
is called a \DefEmph{countermodel
for $\BStat[]{\CtxAccA}{\CtxNRejA}{\CtxRejA}{\CtxNAccA}$
in $\BMatA$}.
One may easily check that
$\BEntName{\BMatA}$ is a substitution-invariant \TheB-con\-se\-quen\-ce relation,
that is finitary 
when $\ValSetA$ is finite.
Taking $\BConNameLetter$ as  $\BEntName{\BMatA}$,
we define
$\SetSetRel{\tAsp}{\BMatA} \SymbDef \SetSetRel{\tAsp}{\BConNameLetter}$
and
$\SetSetRel{\fAsp}{\BMatA} \SymbDef \SetSetRel{\fAsp}{\BConNameLetter}$.

We move now to two-dimensional, or \SetTSetT{}, H-systems,
first introduced in~\cite{greati2021}.
A \DefEmph{(schematic) \SetTSetT{} rule of inference} $\BRuleOfInfA_\BRuleSchemaA$ 
is the collection
of all substitution instances
of the \SetTSetT{} statement $\BRuleSchemaA$,
called the \emph{schema} of $\BRuleOfInfA_\BRuleSchemaA$.
Each 
$\BRuleInst \in  \BRuleOfInfA_\BRuleSchemaA$
is said to be a \emph{rule instance of $\BRuleOfInfA_\BRuleSchemaA$}.
In a proof-theoretic context, rather than writing the \TheB-statement $\BStat[]{\CtxAccA}{\CtxNRejA}{\CtxRejA}{\CtxNAccA}{}$, we shall denote the corresponding rule by $\BRuleIn[]{\CtxAccA}{\CtxNRejA}{\CtxRejA}{\CtxNAccA}$. 
%
A \DefEmph{(schematic) \SetTSetT{} H-system} $\BCalcA$ is a collection of
\SetTSetT{} rules of inference.
	\DefEmph{\SetTSetT{} derivations} are as in
	the \SetSet{} H-systems,
	but now 
	the nodes are labelled with pairs of sets of
	formulas, instead of a single set.
	When applying a rule instance, 
	each formula in the succedent produces a new branch
	as before, but now the formula goes to
	the same component in which it was
	found in the rule instance.
	See Figure~\ref{fig:bderivationscheme}
	for a general representation
	and compare it with Figure~\ref{fig:derivationscheme}.
	

	\begin{figure}[htb]
    \centering
    \scalebox{0.8}{
    \begin{tikzpicture}[every tree node/.style={},
       level distance=1.3cm,sibling distance=1cm,
       edge from parent path={(\tikzparentnode) -- (\tikzchildnode)}, baseline]
        \Tree[.\node[style={draw,circle}] {};
            \edge[dashed];
            [.\node[style={draw}] {$\BNode{\CtxAccA}{\CtxRejA}$};
            \edge node[auto=right] {$\BRuleIn{\CtxAccB}{\EmptySet}{\CtxRejB}{\EmptySet}$};
            [.{$\StarLabel$}
            ]
            ] 
        ] 
    \end{tikzpicture}
    }\quad
    \scalebox{0.8}{
    \begin{tikzpicture}[every tree node/.style={},
       level distance=1cm,sibling distance=.45cm,
       edge from parent path={(\tikzparentnode) -- (\tikzchildnode)}, baseline]
        \Tree[.\node[style={draw,circle}] {};
            \edge[dashed];
            [.\node[style={draw}] {$\BNode{\CtxAccA}{\CtxRejA}$};
            \edge node[auto=right] {$\BRuleIn{\CtxAccB}{\FmD_1,\ldots,\FmD_n}{\CtxRejB}{\FmC_1,\ldots,\FmC_m}$};
            [.\node[style={draw}] {$\BNode{\mkgray{\CtxAccA,} \FmC_1}{\mkgray{\CtxRejA}}$};
                \edge[dashed];
                [.\node[style={draw,circle}]{};
                ]
                \edge[dashed];
                [.\node[style={}]{$\cdots$};
                ]
                \edge[dashed];
                [.\node[style={draw,circle}]{};
                ]
            ]
            [.$\ldots$
            ]
            [.\node[style={draw}] {$\BNode{\mkgray{\CtxAccA,}\FmC_m}{\mkgray{\CtxRejA}}$};
                \edge[dashed];
                [.\node[style={draw,circle}]{};
                ]
                \edge[dashed];
                [.\node[style={}]{$\cdots$};
                ]
                \edge[dashed];
                [.\node[style={draw,circle}]{};
                ]
            ]
            [.\node[style={draw}] {$\BNode{\mkgray{\CtxAccA}}{\FmD_1\mkgray{,\CtxRejA}}$};
                \edge[dashed];
                [.\node[style={draw,circle}]{};
                ]
                \edge[dashed];
                [.\node[style={}]{$\cdots$};
                ]
                \edge[dashed];
                [.\node[style={draw,circle}]{};
                ]
            ]
            [.$\ldots$
            ]
            [.\node[style={draw}] {$\BNode{\mkgray{\CtxAccA}}{\FmD_n\mkgray{,\CtxRejA}}$};
                \edge[dashed];
                [.\node[style={draw,circle}]{};
                ]
                \edge[dashed];
                [.\node[style={}]{$\cdots$};
                ]
                \edge[dashed];
                [.\node[style={draw,circle}]{};
                ]
            ]
            ]
        ]
    \end{tikzpicture}
    }
    \caption{Graphical representation of finite $\BCalcA$-derivations.
    We emphasize that, in both cases,
    we must have
    $\CtxAccB \subseteq \CtxAccA$ and $\CtxRejB \subseteq \CtxRejA$
    to enable the application of the rule.}
    \label{fig:bderivationscheme}
\end{figure}


Let $\BTreeA$ be an $\BCalcA$-derivation.
A node $\BNodeA$ of $\BTreeA$ is
\DefEmph{$\Pair{\CtxNAccB}{\CtxNRejB}$-closed}
in case it is discontinued (namely, labelled with $\StarLabel$) or
it is a leaf node with
$\LabelFn{\BTreeA}(\BNodeA) = \Pair{\CtxAccA}{\CtxRejA}$
and either $\CtxAccA \cap \CtxNAccB \neq \EmptySet$
or $\CtxRejA \cap \CtxNRejB \neq \EmptySet$.
A branch of $\BTreeA$ is 
\DefEmph{$\Pair{\CtxNAccB}{\CtxNRejB}$-closed}
when it ends in a
$\Pair{\CtxNAccB}{\CtxNRejB}$-closed
node.
An $\BCalcA$-derivation $\BTreeA$
is said to be
\DefEmph{$\Pair{\CtxNAccB}{\CtxNRejB}$-closed}
when all of its branches are
$\Pair{\CtxNAccB}{\CtxNRejB}$-closed.
An \DefEmph{$\BCalcA$-proof} of
$\BStat{\CtxAccA}{\CtxNRejA}{\CtxRejA}{\CtxNAccA}$
is a $\Pair{\CtxNAccA}{\CtxNRejA}$-closed
$\BCalcA$-derivation $\BTreeA$ with
$\LabelFn{\BTreeA}(\TreeRoot{\BTreeA}) \SubseteqPair \Pair{\CtxAccA}{\CtxRejA}$.
The definitions of the 
(finitary) substitution-invariant \TheB-consequence relation $\BConCalcName{\BCalcA}$ induced by a (finitary) \SetTSetT{} H-system
$\BCalcA$ and $\BAnaSetA$-analyticity
are obvious generalizations of the corresponding \SetSet{} definitions.

In~\cite{greati2021}, the notion of
sufficient expressiveness was generalized
to nd-\TheB-matrices. We reproduce here
the main definitions for self-containment:
\vspace{-2mm}

\begin{definition}
    Let $\BMatA \SymbDef \Struct{\AlgA, \BMatDesSetA{\BMatA}, \BMatAntiDesSetA{\BMatA}}$
    be a $\SigA$-nd-\TheB-matrix. 
    \vspace{-2mm}
    \begin{itemize}
        \item Given
                $\ValSubsetA,\ValSubsetB \subseteq \ValSetA$
                and $\CogVarA \in \{\Acc,\Rej\}$,
                we say that $\ValSubsetA$
                and $\ValSubsetB$ are \DefEmph{$\CogVarA$-separated},
                denoted by $\Separated{\ValSubsetA}{\ValSubsetB}{\CogVarA}$,
                if $\ValSubsetA \subseteq \CogVarA$
                and $\ValSubsetB \subseteq \InvCog{\CogVarA}$,
                or vice-versa.
        \item Given distinct truth-values $\ValA, \ValB \in \ValSetA$,
		a unary formula $\SepA$ is a \DefEmph{separator for $\Pair{\ValA}{\ValB}$} whenever
            $\Separated{\AlgInterp{\SepA}{\AlgA}(\ValA)}{\AlgInterp{\SepA}{\AlgA}(\ValB)}{\CogVarA}$
            for some $\CogVarA \in \{\Acc,\Rej\}$.
            If 
            there is a separator 
            for each pair of distinct truth-values in $\ValSetA$, 
            then $\BMatA$ is said to be
            \DefEmph{sufficiently expressive}.
    \end{itemize}
\end{definition}
\vspace{-2mm}

\noindent In the same work~\cite{greati2021}, the axiomatization algorithm
of~\cite{marcelino19woll} was also generalized,
guaranteeing that
every sufficiently expressive nd-\TheB-matrix $\BMatA$
is axiomatizable by a $\BAnaSetA$-analytic \SetTSetT{} H-system,
which is finite whenever~$\BMatA$ is finite,
where~$\BAnaSetA$ is a set of separators
for the pairs of truth-values of $\BMatA$.
Note that, in the second bullet of the above
definition, a unary formula is characterized as
a separator whenever it separates a pair of truth-values
according to \emph{at least one} of the distinguished sets of values.
This means that having two of such sets may allow us
to separate more pairs of truth-values than having a single set, 
that is, 
the nd-\TheB-matrices are, in this sense, potentially more
expressive than the (one-dimensional) logical matrices.

\begin{example}
\label{ex:2d-axiom-non-fin}
Let $\AlgA$ be the $\SigA$-nd-algebra 
from Example~\ref{ex:three-valued-non-ax-set-set},
and consider the nd-\TheB-matrix
$\BMatA \SymbDef \Struct{\AlgA,\Set{\tVal},\Set{\fVal}}$.
As we know, in this matrix the pair
$\Pair{\fVal}{\BotVal}$ is not separable if
we consider only the set of designated values
$\Set{\tVal}$. However, as we have now
the set $\Set{\fVal}$ of antidesignated truth-values,
the separation becomes evident: the
propositional variable $\PropA$ is a separator
for this pair now, since
$\fVal \in \Set{\fVal}$
and $\BotVal \not\in \Set{\fVal}$.
The recipe from~\cite{greati2021} 
produces the following \SetTSetT{} axiomatization
for $\BMatA$, with only three very simple schematic rules
of inference:
\vspace{-2mm}
\begin{gather*}
    \BRuleIn[d]{\PropA}{}{\PropA}{}\qquad
    \BRuleIn[d]{}{\PropA}{}{f(\PropA),\PropA}\qquad
    \BRuleIn[d]{}{t(\PropA)}{\PropA}{}
\end{gather*}
By construction, the one-dimensional logic
determined by the nd-matrix of
Example~\ref{ex:three-valued-non-ax-set-set}
inhabits the $\tAsp$-aspect of
$\BEntName{\BMatA}$, thus 
it can be seen as being axiomatized by
this \DefEmph{finite} and \emph{analytic} two-dimensional system
(contrast with the \emph{infinite} \SetSet{} axiomatization
known for this logic provided in that same example).
\end{example}

We constructed above a $\SigA$-nd-\TheB-matrix 
from two $\SigA$-nd-matrices in such a way that
the one-dimensional logics 
determined by latter are fully recoverable from the former.
We formalize this construction below:

\begin{definition}
    Let $\MatA \SymbDef \Struct{\AlgA,\DesSetA}$
    and $\MatA' \SymbDef \Struct{\AlgA,\DesSetA'}$
    be $\SigA$-nd-matrices.
    The \DefEmph{\TheB-product}
    between $\MatA$ and $\MatA'$ is the $\SigA$-nd-\TheB-matrix
    $\MatA \MatBProdFn \MatA' \SymbDef \Struct{\AlgA,\DesSetA,\DesSetA'}$.
\end{definition}

\noindent Note that $\FmSetA\SetSetCR{\MatA}\FmSetB$ if{f}
$\BEnt[]{\FmSetA}{}{}{\FmSetB}{\MatA \MatBProdFn \MatA'}$
if{f} $\FmSetA\SetSetRel{\tAsp}{\MatA \MatBProdFn \MatA'}\FmSetB$,
and $\FmSetA\SetSetCR{\MatA'}\FmSetB$ if{f}
$\BEnt[]{}{\FmSetB}{\FmSetA}{}{\MatA \MatBProdFn \MatA'}$
if{f} $\FmSetA\SetSetRel{\fAsp}{\MatA \MatBProdFn \MatA'}\FmSetB$.
Therefore, $\SetSetCR{\MatA}$ and $\SetSetCR{\MatA'}$
are easily recoverable 
from $\BEntName{\MatA \MatBProdFn \MatA'}$,
since they inhabit, respectively,
the $\tAsp$-aspect and the $\fAsp$-aspect of the latter.
One of the applications of this novel way of putting two distinct logics
together was illustrated in that same Example~\ref{ex:2d-axiom-non-fin} to produce a two-dimensional
analytic and finite axiomatization for
a one-dimensional logic characterized by a $\SigA$-nd-matrix.
As we have shown, the latter one-dimensional logic does not need
to be finitely axiomatizable by a \SetSet{} H-system.
We present this application of \TheB-products with more generality
below:

\begin{proposition}
\label{prop:combine-matrices}
Let $\MatA \SymbDef \Struct{\AlgA,\DesSetA}$
be a $\SigA$-nd-matrix and suppose that $U \subseteq \ValSetA\times\ValSetA$ contains
all and only the pairs
of distinct truth-values
that fail to be
separable in $\MatA$.
If, for some $\MatA' \SymbDef \Struct{\AlgA,\DesSetA'}$,
the pairs in $U$ are separable in $\MatA'$, then
 $\MatA \MatBProdFn \MatA'$
is sufficiently expressive
(thus, axiomatizable by an analytic \SetTSetT{} H-system, that is finite whenever $\AlgA$ is finite).
\end{proposition}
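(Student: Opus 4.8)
The plan is to show directly that $\MatA \MatBProdFn \MatA' = \Struct{\AlgA,\DesSetA,\DesSetA'}$ separates every pair of distinct truth-values, which is precisely the definition of sufficient expressiveness; the parenthetical conclusion about analytic finite axiomatizability then follows immediately from the axiomatization result of~\cite{greati2021} recalled just before the statement. So the whole argument reduces to a case split on an arbitrary pair $\Pair{\ValA}{\ValB}$ of distinct values in $\ValSetA$, according to whether or not $\Pair{\ValA}{\ValB} \in U$.

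First I would recall what separation means in the two-dimensional setting: a unary formula $\SepA$ separates $\Pair{\ValA}{\ValB}$ in $\MatA \MatBProdFn \MatA'$ when $\Separated{\AlgInterp{\SepA}{\AlgA}(\ValA)}{\AlgInterp{\SepA}{\AlgA}(\ValB)}{\CogVarA}$ for \emph{at least one} $\CogVarA \in \{\Acc,\Rej\}$ — that is, using either $\DesSetA$ (the $\Acc$ side) or $\DesSetA'$ (the $\Rej$ side). Now take $\Pair{\ValA}{\ValB}$ with $\ValA \neq \ValB$. If $\Pair{\ValA}{\ValB} \notin U$, then by the hypothesis that $U$ collects \emph{all} non-separable pairs of $\MatA$, this pair is separable in $\MatA$, i.e.\ there is a unary $\SepA$ with $\AlgInterp{\SepA}{\AlgA}(\ValA) \subseteq \DesSetA$ and $\AlgInterp{\SepA}{\AlgA}(\ValB) \subseteq \ValueSetComp{\DesSetA}$ (or vice versa); this is exactly $\Separated{\AlgInterp{\SepA}{\AlgA}(\ValA)}{\AlgInterp{\SepA}{\AlgA}(\ValB)}{\Acc}$, so $\SepA$ witnesses separation in $\MatA \MatBProdFn \MatA'$. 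If instead $\Pair{\ValA}{\ValB} \in U$, then by hypothesis this pair is separable in $\MatA' = \Struct{\AlgA,\DesSetA'}$, so there is a unary $\SepA$ with $\AlgInterp{\SepA}{\AlgA}(\ValA) \subseteq \DesSetA'$ and $\AlgInterp{\SepA}{\AlgA}(\ValB) \subseteq \ValueSetComp{\DesSetA'}$ (or vice versa), i.e.\ $\Separated{\AlgInterp{\SepA}{\AlgA}(\ValA)}{\AlgInterp{\SepA}{\AlgA}(\ValB)}{\Rej}$; again $\SepA$ witnesses separation in the \TheB-product. Crucially, both matrices share the same underlying nd-algebra $\AlgA$, so the induced multifunction $\AlgInterp{\SepA}{\AlgA}$ is the same object in either case and the computations $\AlgInterp{\SepA}{\AlgA}(\ValA)$, $\AlgInterp{\SepA}{\AlgA}(\ValB)$ transfer verbatim; only the distinguished sets against which we test them change.

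Since $\Pair{\ValA}{\ValB}$ was arbitrary, every pair of distinct truth-values is separated, so $\MatA \MatBProdFn \MatA'$ is sufficiently expressive. Finally, invoking the generalized axiomatization algorithm of~\cite{greati2021}, a sufficiently expressive nd-\TheB-matrix is axiomatizable by a $\BAnaSetA$-analytic \SetTSetT{} H-system — with $\BAnaSetA$ a set of separators for its pairs of truth-values — which is finite whenever the matrix (equivalently, $\AlgA$) is finite. This yields the bracketed consequence and completes the proof. I do not anticipate a genuine obstacle here: the statement is essentially a bookkeeping observation about how the two dimensions pool their separating power, and the only point demanding a little care is making explicit that ``separable in $\MatA'$'' on the $\Rej$-side literally instantiates the two-dimensional separation clause — which it does by the definition's ``for some $\CogVarA \in \{\Acc,\Rej\}$''. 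The content-bearing hypothesis is simply that $\MatA'$ picks up exactly the slack left by $\MatA$, namely the pairs in $U$.
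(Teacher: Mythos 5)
Your proposal is correct and follows essentially the same argument as the paper: a case split on whether the pair of distinct values lies in $U$, using a separator from $\MatA$ (the $\Acc$-side) in the negative case and one from $\MatA'$ (the $\Rej$-side) in the positive case, then invoking the axiomatization result of~\cite{greati2021}. The only difference is expository — you spell out explicitly that both matrices share the algebra $\AlgA$, so the induced multifunctions transfer verbatim, which the paper leaves implicit.
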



\section{A finite and analytic proof system for \mciName{}}
\label{sec:mci-two-d}
In the spirit of~\autoref{prop:combine-matrices},
we define below a nd-\TheB-matrix
by combining 
the matrices 
$\mciMatrixName \SymbDef \Struct{\mciAlgebraName, \DesSetFive}$ and 
$\mciRejMatrixName \SymbDef \Struct{\mciAlgebraName, \RejSetFive}$
introduced in Section~\ref{sec:mci-non-finit} (\autoref{def:avrons-five-valued-matrix}
and \autoref{def:mci-f-preserving}):

\begin{definition}
    \label{def:mci-b-matrix}
    Let
    $\mciBMatrixName \SymbDef \mciMatrixName \MatBProdFn \mciRejMatrixName = \Struct{\mciAlgebraName, \DesSetFive, \RejSetFive}$,
    with
    $\DesSetFive \SymbDef \Set{\mciI, \mciT, \mcit}$
    and
    $\RejSetFive \SymbDef \Set{\mcif,\mciI,\mciT}$.
\end{definition}

When we consider now both sets $\DesSetFive$
and $\RejSetFive$ of
designated and 
antidesignated truth-values,
the separation
of all truth-values of $\mciAlgebraName$ becomes possible,
that is, $\mciBMatrixName$ is sufficiently expressive,
as guaranteed by~\autoref{prop:combine-matrices}.
Furthermore, notice that
we have two alternatives for separating
the pairs $\Pair{\mciI}{\mcit}$
and $\Pair{\mciI}{\mciT}$:
either using the formula
$\mciNeg\PropA$
or the formula $\mciCons\PropA$.
%
With this finite sufficiently expressive nd-\TheB-matrix in hand,
producing a \emph{finite} $\{\PropA,\mciCons\PropA\}$-analytic two-dimensional H-system for it is immediate 
by~\cite[Theorem 2]{greati2021}.
Since \mciName{} inhabits the
$\tAsp$-aspect of $\BEntName{\mciBMatrixName}$,
we may then conclude that:

\begin{theorem}
$\mciName{}$ is axiomatizable by a finite and analytic two-dimensional H-system.
\end{theorem}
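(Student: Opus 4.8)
The plan is to assemble the theorem directly from the pieces already in place. First I would invoke \autoref{prop:combine-matrices} with $\MatA \SymbDef \mciMatrixName$ and $\MatA' \SymbDef \mciRejMatrixName$: by \autoref{coro:mcinotsuff}, $\mciMatrixName$ fails to be sufficiently expressive, and the discussion following \autoref{def:mci-f-preserving} identifies the offending pairs as (at most) $\Pair{\mcit}{\mciT}$ and $\Pair{\mcif}{\mciF}$, both of which are separated in $\mciRejMatrixName$ by the propositional variable $\PropA$ itself (since $\mcit \notin \RejSetFive \ni \mciT$ and $\mcif \in \RejSetFive \not\ni \mciF$). Hence $\mciBMatrixName = \mciMatrixName \MatBProdFn \mciRejMatrixName$ is sufficiently expressive. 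Since $\mciAlgebraName$ is finite, the generalized axiomatization algorithm of~\cite[Theorem 2]{greati2021} then yields a finite $\BAnaSetA$-analytic \SetTSetT{} H-system $\BCalcA$ with $\BConCalcName{\BCalcA} = \BEntName{\mciBMatrixName}$, where $\BAnaSetA$ is a set of separators for the pairs of truth-values; the remark after \autoref{def:mci-b-matrix} shows one may take $\BAnaSetA = \Set{\PropA, \mciCons\PropA}$.

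Next I would use the \TheB-product recovery identities established just after the definition of $\MatBProdFn$: for all $\FmSetA, \FmSetB$, $\FmSetA \SetSetCR{\mciMatrixName} \FmSetB$ iff $\FmSetA \SetSetRel{\tAsp}{\mciBMatrixName} \FmSetB$. Since $\mciMatrixName$ is a characteristic nd-matrix for \mciName{} (Avron~\cite{avron2008}), this means \mciName{} inhabits the $\tAsp$-aspect of $\BEntName{\mciBMatrixName} = \BConCalcName{\BCalcA}$. Therefore $\BCalcA$ is a finite and analytic two-dimensional H-system whose $\tAsp$-aspect is exactly \mciName{}, which is the claim.

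I do not anticipate a genuine obstacle here, since the theorem is essentially a corollary: all the heavy lifting — the generalized sufficient-expressiveness criterion and the algorithm producing a finite analytic system, \autoref{prop:combine-matrices} reducing matters to separability of the bad pairs, and the \TheB-product recovery lemma — is imported from earlier in the paper and from~\cite{greati2021}. The one point deserving explicit care is the verification that $\Pair{\mcit}{\mciT}$ and $\Pair{\mcif}{\mciF}$ are the \emph{only} non-separable pairs of $\mciMatrixName$ (so that \autoref{prop:combine-matrices} applies cleanly with $U$ consisting exactly of these), and dually that the remaining pairs — in particular $\Pair{\mcit}{\mciF}$ and $\Pair{\mcif}{\mciT}$, which the text flags as unclear in $\mciRejMatrixName$ alone — are still separable in $\mciMatrixName$ itself; this is a finite check over the $\binom{5}{2}=10$ pairs using the tables of \autoref{def:avrons-five-valued-matrix}, and it is what actually certifies that $\mciBMatrixName$ meets the hypothesis of \autoref{prop:combine-matrices}. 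Everything else is a matter of chaining the cited results.
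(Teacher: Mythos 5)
Your proposal is correct and follows essentially the same route as the paper: form the \TheB-product $\mciBMatrixName = \mciMatrixName \MatBProdFn \mciRejMatrixName$, apply \autoref{prop:combine-matrices} to get sufficient expressiveness, invoke the axiomatization algorithm of~\cite{greati2021} for a finite $\Set{\PropA,\mciCons\PropA}$-analytic \SetTSetT{} system, and recover \mciName{} as the inhabitant of the $\tAsp$-aspect. Your added insistence on explicitly checking that $\Pair{\mcit}{\mciT}$ and $\Pair{\mcif}{\mciF}$ are exactly the non-separable pairs of $\mciMatrixName$ (and that they are separated by $\PropA$ in $\mciRejMatrixName$) is the right point to flag; the paper treats this finite verification somewhat informally.
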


Our axiomatization recipe 
delivers an H-system with about 300 rule schemas.
When we simplify it using the streamlining procedures
indicated in that paper,
we obtain a much more succinct and insightful presentation,
with 28 rule schemas,
which we call $\mciBCalcName$.
The full presentation of this system 
is given below:
{
\small
\begin{gather*}
    \BRuleIn[d]{\PropB}{}{}{\PropA\mciImp\PropB}{\mciBRuleName{\mciImp}{1}}
	\;\;
	\BRuleIn[d]{}{}{}{\PropA,\PropA\mciImp\PropB}{\mciBRuleName{\mciImp}{2}}
	\;
	\BRuleIn[d]{\PropA\mciImp\PropB, \PropA}{}{}{\PropB}{\mciBRuleName{\mciImp}{3}}
	\;
	\BRuleIn[d]{\PropA}{\PropA\mciImp\PropB}{}{\PropB}{\mciBRuleName{\mciImp}{4}}
	\;
	\BRuleIn[d]{\PropA\mciImp\PropB, \mciCons(\PropA\mciImp\PropB)}{}{\PropA\mciImp\PropB}{}{\mciBRuleName{\mciImp}{5}}\\
	\BRuleIn[d]{\PropA,\PropB}{}{}{\PropA\land\PropB}{\mciBRuleName{\land}{1}}
	\;\;
	\BRuleIn[d]{\PropA\land\PropB}{}{}{\PropA}{\mciBRuleName{\land}{2}}
	\;
	\BRuleIn[d]{\PropA\land\PropB}{}{}{\PropB}{\mciBRuleName{\land}{3}}
	\;
	\BRuleIn[d]{}{\PropA\land\PropB}{}{\PropA\land\PropB}{\mciBRuleName{\land}{4}}
	\;
	\BRuleIn[d]{\PropA{\land}\PropB, \mciCons(\PropA{\land}\PropB)}{}{\PropA{\land}\PropB}{}{\mciBRuleName{\land}{5}}\\
	\BRuleIn[d]{\PropA}{}{}{\PropA\lor\PropB}{\mciBRuleName{\lor}{1}}
	\;\;
	\BRuleIn[d]{\PropB}{}{}{\PropA\lor\PropB}{\mciBRuleName{\lor}{2}}
	\;\;
	\BRuleIn[d]{\PropA\lor\PropB}{}{}{\PropA,\PropB}{\mciBRuleName{\lor}{3}}
	\;\;
	\BRuleIn[d]{}{\PropA\lor\PropB}{}{\PropA,\PropB}{\mciBRuleName{\lor}{4}}
	\;\;
	\BRuleIn[d]{\PropA{\lor}\PropB, \mciCons(\PropA{\lor}\PropB)}{}{\PropA{\lor}\PropB}{}{\mciBRuleName{\lor}{5}}\\
	\BRuleIn[d]{\mciCons\PropA}{\mciCons\PropA}{}{}{\mciBRuleName{\mciCons}{1}}
	\;\;
	\BRuleIn[d]{}{}{}{\mciCons\mciCons\PropA}{\mciBRuleName{\mciCons}{2}}
	\;\;
	\BRuleIn[d]{}{}{\mciCons\PropA}{\mciCons\PropA}{\mciBRuleName{\mciCons}{3}}
	\;\;
	\BRuleIn[d]{}{\PropA}{}{\mciCons\PropA}{\mciBRuleName{\mciCons}{4}}
	\;\;
	\BRuleIn[d]{}{\mciCons\PropA}{}{\PropA}{\mciBRuleName{\mciCons}{5}}\\
	\BRuleIn[d]{}{\mciNeg\PropA,\PropA}{}{}{\mciBRuleName{\mciNeg}{1}}
	\;\;
	\BRuleIn[d]{\mciNeg\PropA,\mciCons\PropA,\PropA}{}{}{}{\mciBRuleName{\mciNeg}{2}}
	\;\;
	\BRuleIn[d]{\mciNeg\PropA,\PropA}{\PropA}{}{}{\mciBRuleName{\mciNeg}{3}}
	\;\;
	\BRuleIn[d]{\mciCons\mciNeg\PropA}{}{\mciNeg\PropA,\PropA}{}{\mciBRuleName{\mciNeg}{4}}\\
	\BRuleIn[d]{}{}{\mciNeg\PropA,\PropA}{\mciNeg\PropA}{\mciBRuleName{\mciNeg}{5}}
	\;\;
	\BRuleIn[d]{}{}{}{\mciNeg\PropA,\mciCons\PropA}{\mciBRuleName{\mciNeg}{6}}
	\;\;
	\BRuleIn[d]{}{}{}{\mciNeg\PropA,\PropA}{\mciBRuleName{\mciNeg}{7}}
	\;\;
	\BRuleIn[d]{}{\PropA}{}{\mciCons\mciNeg\PropA}{\mciBRuleName{\mciNeg}{8}}
\end{gather*}

}


Note that the set of rules $\{\mciBRuleName{\ConA}{i} \mid \ConA \in \{\land,\lor,\mciImp\}, i \in \{1,2,3\}\}$ makes it clear that the $\tAsp$-aspect of
the induced \TheB-consequence relation is inhabited by a logic extending positive classical
logic, while the remaining rules for these connectives involve
interactions between the two dimensions.
Also, rule 
$\mciBRuleName{\mciNeg}{2}$ 
indicates that $\mciCons$ satisfies one of the main conditions for being
taken as a consistency connective in the logic inhabiting the $\tAsp$-aspect. 
In fact, all these observations are aligned with the
fact that the logic inhabiting the $\tAsp$-aspect
of $\BConCalcName{\mciBCalcName}$ is precisely \mciName{}.
See, in Figure~\ref{fig:derivationmci},
$\mciBCalcName{}$-derivations showing that, in \mciName{},
$\mciNeg\mciCons\PropA$ and $\PropA{\land}\mciNeg\PropA$
are logically equivalent and
that $\mciCons\mciNeg\mciCons\PropA$ is a theorem.

\begin{figure}
    \centering
    \begin{tikzpicture}[every tree node/.style={},
				level distance=1.1cm,sibling distance=-0.2cm,
				edge from parent path={(\tikzparentnode) -- (\tikzchildnode)}, baseline]
		\Tree[.$\BNode{\PropA\land\mciNeg\PropA}{}$
		    \edge[] node[auto=right]{$\mciBRuleName{\land}{2}$};
		    [.$\BNode{\PropA}{}$
		        \edge[] node[auto=right]{$\mciBRuleName{\land}{3}$};
    		    [.$\BNode{\mciNeg\PropA}{}$
    		        \edge[] node[auto=right]{$\mciBRuleName{\mciNeg}{7}$};
    		        [.$\BNode{\mciNeg\mciCons\PropA}{}$
    		        ]
    		        [.$\BNode{\mciCons\PropA}{}$
    		            \edge[] node[auto=right]{$\mciBRuleName{\mciNeg}{2}$};
    		            [.$\StarLabel$
    		            ]
    		        ]
    		    ]
		    ]
		]
	\end{tikzpicture}
    \begin{tikzpicture}[every tree node/.style={},
					level distance=1.1cm,sibling distance=.01cm,
					edge from parent path={(\tikzparentnode) -- (\tikzchildnode)}, baseline]
		\Tree[.$\BNode{\mciNeg\mciCons\PropA}{}$
		    \edge[] node[auto=right]{$\mciBRuleName{\mciNeg}{6}$};
		    [.$\BNode{\mciNeg\PropA}{}$
		        \edge[] node[auto=right]{$\mciBRuleName{\mciCons}{5}$};
		        [.$\BNode{\PropA}{}$
		            \edge[] node[auto=right]{$\mciBRuleName{\land}{1}$};
		            [.$\BNode{\PropA\land\mciNeg\PropA}{}$
		            ]
		        ]
		        [.$\BNode{}{\mciCons\PropA}$
		            \edge[] node[auto=right]{$\mciBRuleName{\mciCons}{3}$};
    		        [.$\BNode{\mciCons\PropA}{}$
    		            \edge[] node[auto=right]{$\mciBRuleName{\mciCons}{2}$};
        		        [.$\BNode{\mciCons\mciCons\PropA}{}$
        		            \edge[] node[auto=right]{$\mciBRuleName{\mciNeg}{2}$};
        		            [.$\StarLabel$
        		            ]
        		        ]
    		        ]
		        ]
		    ]
		    [.$\BNode{\mciCons\PropA}{}$
		        \edge[] node[auto=right]{$\mciBRuleName{\mciCons}{2}$};
		        [.$\BNode{\mciCons\mciCons\PropA}{}$
		            \edge[] node[auto=right]{$\mciBRuleName{\mciNeg}{2}$};
    		        [.$\StarLabel$
    		        ]
		        ]
		    ]
		]
	\end{tikzpicture}
    \begin{tikzpicture}[every tree node/.style={},
						level distance=1.0cm,sibling distance=-.2cm,
						edge from parent path={(\tikzparentnode) -- (\tikzchildnode)}, baseline]
			\Tree[.$\BNode{\EmptySet}{\EmptySet}$
			   \edge[] node[auto=right]{$\mciBRuleName{\mciCons}{4}$};  [.$\BNode{\mciCons\mciNeg\mciCons\PropA}{}$
			    ]		    
			    [.$\BNode{}{\mciNeg\mciCons\PropA}$
    			    \edge[] node[auto=right]{$\mciBRuleName{\mciNeg}{8}$};
			    	[.$\BNode{\mciCons\mciNeg\mciCons\PropA}{}$
			        ]
			        [.$\BNode{}{\mciCons\PropA}$
			            \edge[] node[auto=right]{$\mciBRuleName{\mciNeg}{5}$};
			            [.$\BNode{\mciNeg\mciCons\PropA}{}$
    			            \edge[] node[auto=right]{$\mciBRuleName{\mciCons}{3}$};
    			            [.$\BNode{\mciCons\PropA}{}$
    			                \edge[] node[auto=right]{$\mciBRuleName{\mciCons}{2}$};
        			            [.$\BNode{\mciCons\mciCons\PropA}{}$
        			                \edge[] node[auto=right]{$\mciBRuleName{\mciNeg}{2}$};
        			                [.$\StarLabel$
        			                ]
        			            ]
    			            ]
			            ]
			        ]
			    ]
			]
	\end{tikzpicture}
    \caption{$\mciBCalcName$-derivations showing, respectively, that 
    $\BConCalc{\PropA\land\mciNeg\PropA}{\EmptySet}{\EmptySet}{\mciNeg\mciCons\PropA}{\mciBCalcName}$,
    $\BConCalc{\mciNeg\mciCons\PropA}{\EmptySet}{\EmptySet}{\PropA\land\mciNeg\PropA}{\mciBCalcName}$ and
    $\BConCalc{\EmptySet}{\EmptySet}{\EmptySet}{\mciCons\mciNeg\mciCons\PropA}{\mciBCalcName}$. Note that, for a
    cleaner presentation, we omit the formulas inherited from parent nodes.}
    \label{fig:derivationmci}
\end{figure}

\section{Concluding remarks}
\vspace{-.5em}
In this work, we introduced a mechanism 
for combining two non-deterministic logical matrices
into a non-deterministic \TheB-matrix,
creating the possibility of producing
finite and analytic two-dimensional axiomatizations
for one-dimensional logics that may fail to be finitely
axiomatizable in terms of one-dimensional Hilbert-style systems.
It is worth mentioning that, as proved in~\cite{greati2021}, one may perform proof search and
countermodel search over the resulting two-dimensional systems
in time at most exponential
on the size of the \TheB-statement of interest
through a straightforward proof-search algorithm.


We illustrated the above-mentioned combination mechanism with two examples,
one of them corresponding to a well-known logic of formal
inconsistency called \mciName{}.
We ended up proving not only that this logic
is not finitely axiomatizable in one dimension, but also
that it is the limit of a strictly increasing
chain of LFIs extending the logic \mbcName{}.
From the perspective of the study of
        \TheB-consequence relations,
        these examples allow us to eliminate the suspicion
        that a two-dimensional H-system $\BCalcA$ may always be converted into
        \SetSet{} H-systems for the logics inhabiting the one-dimensional aspects of $\BConCalcName{\BCalcA}$ without losing any
        desirable property (in this case, finiteness of
        the presentation).

At first sight, the formalism of two-dimensional
H-systems may be confused with
the formalism of $n$-sided sequents~\cite{avron2007,avron2005igpl},
in which the objects manipulated by rules
of inference (the so-called \emph{$n$-sequents}) accommodate more than
two sets of formulas in their structures.
The reader interested in
a comparison between these two different approaches is referred to
the concluding remarks of
\cite{greati2021}.

We close with some observations 
regarding $\mciBMatrixName$ and the two-dimensional H-system $\mciBCalcName$.
A one-dimensional logic $\SSHCR{}$ is said to be
\emph{$\mciNeg$-consistent} when
$\FmA,\mciNeg\FmA \SSHCR{} \EmptySet$
and \emph{$\mciNeg$-determined} when
$\EmptySet \SSHCR{} \FmA,\mciNeg\FmA$
for all $\FmA \in \LangSet{\SigA}{\PropSetA}$.
A \TheB-consequence relation
$\BConName{}$
is said to \emph{allow for gappy reasoning}
when $\nBCon[]{\FmA}{}{\FmA}{}$
and to \emph{allow for glutty reasoning} when
$\nBCon[]{}{\FmA}{}{\FmA}$,
for some $\FmA \in \LangSet{\SigA}{\PropSetA}$.
Notice that $\mciNeg$-determinedness 
in the logic inhabiting the $\tAsp$-aspect of a
\TheB-consequence relation
by no
means implies the disallowance of gappy reasoning in the two-dimensional
setting: we still have $\mciF \in \ValueSetComp{\DesSetFive} \cap \ValueSetComp{\RejSetFive}$,
so one may both non-accept and non-reject a formula $\FmA$ in $\BConCalcName{\mciBCalcName}$,
even though non-accepting both $\FmA$ and its negation in \mciName{} 
is not
possible, in view of rule $\mciBRuleName{\mciNeg}{7}$. 
Similarly, the recovery of $\mciNeg$-consistency achieved via
$\mciCons$ in such logic 
does not coincide with the gentle disallowance of glutty reasoning in
$\BConCalcName{\mciBCalcName}$,
that is, we do not have, in general, 
$\BConCalc[]{\PropA,\mciCons\PropA}{}{\PropA}{}{\mciBCalcName}$
or
$\BConCalc[]{\PropA}{}{\mciCons\PropA,\PropA}{}{\mciBCalcName}$,
even though for binary compounds both are derivable in view of rules
$\mciBRuleName{\ConA}{5}$, for $\ConA \in \{\land,\lor,\mciImp\}$,
and $\mciCons_1^{\mciName{}}$.
With these observations we hope to call attention to
the fact that \TheB-consequence relations
open the doors for further developments 
concerning the study of paraconsistency (and, dually, of paracompleteness), as well as the~study of 
recovery operators~\cite{carnielli2019}.

\section*{Acknowledgements}

V.\ Greati acknowledges support from
CAPES
--- Finance Code 001
and from the
FWF project P 33548. J.~Marcos acknowledges support from
CNPq.

%
%
%
%
\bibliographystyle{plain}
\bibliography{ms.bib}

\begin{appendix}
    \section{Detailed proofs}
\label{app:proofs}

\textbf{Lemma~\ref{lem:mcisequence}.}
\textit{
    For each $1 \leq k < \omega$,
    let 
    $\SetFmlaCR{k} \;\SymbDef\; 
    \SetFmlaCR{\mciHilbertName^{2k - 1}}$.
    Then~$\SetFmlaCR{1} \;\subseteq\; \SetFmlaCR{2} \;\subseteq \ldots$,
    and
    \begin{equation*}
    	\SetFmlaCR{\mciName} \;= \Supremum{}_{1 \leq k < \omega} \SetFmlaCR{k}.
    \end{equation*}
}
\begin{proof}
By Definition~\ref{def:mcik}, every rule schema in
$\mciHilbertName^{2k - 1}$
is also in
$\mciHilbertName^{2(k+1) - 1}$,
thus, for every $1 \leq k < \omega$,
we have
$\SetFmlaCR{k} \;\subseteq\; \SetFmlaCR{k+1}$.
Let~$\SetFmlaCR{\omega} \;\SymbDef\; \Supremum{}_{1 \leq k < \omega} \SetFmlaCR{k}.$
From right to left, 
if $\FmSetA \SetFmlaCR{\omega} \FmB$, then,
	for every Tarskian consequence relation $\SetFmlaCR{\ast}$ over $\SigMCI$ 
	such that $\SetFmlaCR{\ast}\; \supseteq\; \SetFmlaCR{k}$
	for all $k \in \omega$, we have
	$\FmSetA \SetFmlaCR{\ast} \FmB$.
	By Proposition~\ref{prop:mciextendsall},
	then, we have
$\FmSetA\SetFmlaCR{\mciName}\FmB$, in particular.
From left to right,
suppose that $\FmSetA\SetFmlaCR{\mciName}\FmB$
and 
consider a derivation bearing witness to this fact.
Let $m \in \omega$ be such that only instances
of the rule schemas
\ref{rule:mciCij}, for $0 \leq j \leq m$,
and possibly instances of the other rule schemas
not of the form of \ref{rule:mciCij}
are applied in that derivation.
	Let
	$\SetFmlaCR{\ast\ast}$ be a
	Tarskian consequence relation over $\SigMCI$ 
	such that
	$\SetFmlaCR{\ast\ast} \;\supseteq\; \SetFmlaCR{k}$
	for all $1 \leq k < \omega$.
	Then, in particular,
	$\SetFmlaCR{\ast\ast}\; \supseteq\; \SetFmlaCR{m} \;=\; \SetFmlaCR{\mciHilbertName^{2m-1}}$.
	Since all schemas
	\ref{rule:mciCij}, for $0 \leq j \leq m$,
	are in $\mciHilbertName^{2m-1}$,
	we have $\FmSetA \SetFmlaCR{\mciHilbertName^{2m-1}} \FmB$
	and then
	$\FmSetA \SetFmlaCR{\ast\ast} \FmB$.
	As $\SetFmlaCR{\ast\ast}$ was arbitrary, we are done.\qed
\end{proof}

\noindent\textbf{Lemma 
\ref{lem:itneg}. }
For all $k \geq 1$ and $1 \leq m \leq 2k$,
\[
\AlgInterp{\mciNeg}{\AlgA_k}^m(\Suc{k}+1) =
\begin{cases}
(\Suc{k}+1)+\frac{m}{2}, & \text{if } m \text{ is even}\\
1 +\frac{m+1}{2}, & otherwise\\
\end{cases}
\]
\begin{proof}
Let $k \geq 1$. We prove the lemma by strong induction on $1 \leq m \leq 2k$. For $m=1$,
we have $\mciNeg(\Suc{k}+1)=(\Suc{k}+1)-(\Suc{k}-1) = 2 = 1 + \frac{1+1}{2}$.
	Assume now that (IH): the present lemma holds for all $m' < m$, for a given $m > 1$.
\begin{itemize}
    \item Suppose that $m = 2s$, with $1 \leq s \leq k$.
	    By (IH), we have that
    $\mciNeg^{2s}(\Suc{k}+1) = \mciNeg(\mciNeg^{2s-1}(\Suc{k}+1)) = \mciNeg(1+\frac{(2s-1)+1}{2}) = \mciNeg(1+s)$.
    By the interpretation of $\mciNeg$, as $2 \leq 1+s \leq \Suc{k}$, we have $\mciNeg(1+s) = 1+s+\Suc{k} = (\Suc{k}+1) + \frac{m}{2}$.
    \item Suppose that $m = 2s+1$, with $1 \leq s \leq k-1$.
	    By (IH), we have
    $\mciNeg^{2s+1}(\Suc{k}+1) = \mciNeg(\mciNeg^{2s}(\Suc{k}+1)) = \mciNeg(\Suc{k}+1 + \frac{2s}{2}) = \mciNeg(\Suc{k}+1+s)$. As $\Suc{k}+2 \leq \Suc{k}+1+s \leq \Suc{k}+k$, the interpretation of $\mciNeg$ gives
    us that $\mciNeg(\Suc{k}+1+s) = (\Suc{k}+1+s) - (\Suc{k}-1) = s + 2 = \frac{m-1}{2} + 2 = (\frac{m-1}{2}+1)+1 = 1+\frac{m+1}{2}$.\qed
\end{itemize}
\end{proof}

\noindent \textbf{Lemma 
\ref{lem:mcistrictinc}. }
For all $1 \leq k < \omega$, 
we have
$\SetFmlaCR{\mciHilbertName^{2\Suc{k}-1}} \mciCons\mciNeg^{2k}\mciCons\PropA$
but 
$\not\SetFmlaCR{\mciHilbertName^{2k-1}} \mciCons\mciNeg^{2k}\mciCons\PropA$.
\begin{proof}
Let $1 \leq k < \omega$.
We start by showing that
$\mciHilbertName^{2k-1}$ is sound for
$\MatA_k$.
The rule of inference from positive classical logic
are sound with respect to $\MatA_k$, since
the mapping $h$ given by $h(\ValA) = \TheFalse$
if $\ValA \in \{1,\ldots,\Suc{k}\}$
and $h(\ValA) = \TheTrue$ otherwise
is a strong homomorphism from the positive fragment of $\MatA_k$
onto $\BooleanMat$,
the usual two-valued
matrix that determines positive classical logic.
%
Below we show
soundness of the remaining rules (all of which are axiom schemas),
which involve the
connectives $\mciNeg$ and $\mciCons$. 
The lemma just proved will be employed in the
case of~\ref{rule:mciCij}.

\begin{description}[font=\textrm]
	\item[\ref{rule:mciExcMid}] Suppose that
    $v(\FmA \lor \mciNeg\FmA) \in \ValueSetComp{D_k}$,
    then $v(\FmA) \in \ValueSetComp{D_k}$
    and $v(\mciNeg\FmA) \in \ValueSetComp{D_k}$. From
    the latter, we have $\Suc{k}+1 \leq v(\FmA) \leq 2\Suc{k}-1$, but then $v(\FmA) \in D_k$,
    a contradiction.
    \item[\ref{rule:mcibc}] \sloppy Suppose that
    $v(\mciCons\FmA \mciImp (\FmA \mciImp (\mciNeg\FmA \mciImp \FmB))) \in \ValueSetComp{D_k}$.
    Then (a): $v(\mciCons\FmA) \in D_k$ and
    $v(\FmA \mciImp (\mciNeg\FmA \mciImp \FmB)) \in \ValueSetComp{D_k}$.
    From the latter, reasoning in the same
    way, we have (b): $v(\FmA) \in D_k$,
    (c): $v(\mciNeg\FmA) \in D_k$
    and $v(\FmB) \in \ValueSetComp{D_k}$.
    From (b), (c) and the interpretation
    of $\mciNeg$, we have that $v(\FmA)=2\Suc{k}$,
    but then $v(\mciCons\FmA) = 1 \in \ValueSetComp{D_k}$, contradicting (a).
    \item[\ref{rule:mciCi}] Suppose that
    $v(\mciNeg\mciCons\FmA \mciImp (\FmA \land \mciNeg\FmA)) \in \ValueSetComp{D_k}$.
    Then (a): $v(\mciNeg\mciCons\FmA) \in D_k$
    and (b): $v(\FmA \land \mciNeg\FmA) \in \ValueSetComp{D_k}$.
    From (a), we have (c): $1 \leq v(\mciCons\FmA) \leq \Suc{k}$ or $v(\mciCons\FmA) = 2\Suc{k}$.
    From (b),
    we have that either (b1): $v(\FmA) \in \ValueSetComp{D_k}$ or (b2): $v(\mciNeg\FmA) \in \ValueSetComp{D_k}$. By cases:
    \begin{itemize}
        \item if (b1), then $v(\mciCons\FmA) = \Suc{k} + 1$, contradicting (c).
        \item if (b2), then $\Suc{k}+1 \leq v(\FmA) \leq 2\Suc{k}-1$ by the interpretation
        of $\mciNeg$, but then
        $v(\mciCons\FmA) = \Suc{k}+1$
        by the interpretation of $\mciCons$,
        contradicting (c).
    \end{itemize}
    \item[\ref{rule:mciCij}]
    For $j = 0$, suppose that $v(\mciCons\mciCons\FmA) \in \ValueSetComp{D_k}$. Then,
    $v(\mciCons\FmA) = 2\Suc{k}$, which
    is impossible from the interpretation
    of $\mciCons$.
     Let $1 \leq j \leq 2k-1$.
    Suppose that $v(\mciCons\mciNeg^j\mciCons\FmA) \in \ValueSetComp{D_k}$. Then, by the
    interpretation of $\mciCons$, we have
    (a): $v(\mciNeg^j\mciCons\FmA) = 2\Suc{k}$. By cases
    on the possible values of $v(\mciCons\FmA)$:
    \begin{itemize}
        \item if $v(\mciCons\FmA) = \Suc{k}+1$:
        by Lemma \ref{lem:itneg}, if $j$ is even,
	we have $\mciNeg^j(\Suc{k}+1) = (\Suc{k}+1) + \frac{j}{2}
		    = (\Suc{k}+1)+s = k+2 + s \leq k+2+k-1 = 2k + 1 < 2\Suc{k}$, with $1 \leq s \leq k - 1$.
	If $j$ is odd, then
		    $\mciNeg^j(\Suc{k}+1) = 1 + \frac{j+1}{2} = 1 + \frac{2s-1+1}{2} = 1 + s \leq 1 + k < 2\Suc{k}$, with $1 \leq s \leq k$.
		    Both cases contradict (a).
        \item if $v(\mciCons\FmA) = 1$:
        may apply essentially the same reasoning as in the previous
		    case, since $v(\mciNeg^j\mciCons\FmA) = \mciNeg^{j-1}v(\mciNeg\mciCons\FmA) = \mciNeg^{j-1}(\Suc{k}+1)$.
    \end{itemize}
\end{description}

\noindent For the second part of the proof, take a $\MatA_k$-valuation
$\ValuationA$ such that
$\ValuationA(\PropA) = 1$. Then $\ValuationA(\mciCons \PropA) = \Suc{k}+1$
and, since $2k$ is even, by Lemma~\ref{lem:itneg},
we have $\mciNeg^{2k}(\Suc{k}+1) = (\Suc{k}+1)+\frac{2k}{2} = \Suc{k}+1 + k = 2k+2 = 2\Suc{k}$.
Thus $\ValuationA(\mciNeg^{2k}\mciCons \PropA) = 2\Suc{k}$
and, by the interpretation of $\mciCons$,
we have $v(\mciCons\mciNeg^{2k}\mciCons \PropA) = 1 \in \ValueSetComp{D_k}$, and
we are done.\qed
\end{proof}

\noindent\textbf{\autoref{prop:combine-matrices}.}
Let $\MatA \SymbDef \Struct{\AlgA,\DesSetA}$
be a $\SigA$-nd-matrix and suppose that $U \subseteq \ValSetA\times\ValSetA$ contains
all and only the pairs
of distinct truth-values
that fail to be
separable in $\MatA$.
If, for some $\MatA' \SymbDef \Struct{\AlgA,\DesSetA'}$,
the pairs in $U$ are separable in $\MatA'$, then
 $\MatA \MatBProdFn \MatA'$
is sufficiently expressive
(thus, axiomatizable by an analytic \SetTSetT{} H-system, that is finite whenever $\AlgA$ is finite).
\begin{proof}
Let $\Pair{\ValC}{\ValD} \in \ValSetA\times\ValSetA$.
In case 
$\Pair{\ValC}{\ValD} \not\in U$, there is a separator
$\SepA$ for $\Pair{\ValC}{\ValD}$
in $\MatA$,
that is,
$\Separated{\AlgInterp{\SepA}{\AlgA}(\ValC)}{\AlgInterp{\SepA}{\AlgA}(\ValD)}{\DesSetA}$.
Otherwise, if all pairs in $S$
are
separable in $\MatA'$, then,
in particular, $\Pair{\ValC}{\ValD}$
is also separable in $\MatA'$, say, by a separator
$\SepA'$, that is,
$\Separated{\AlgInterp{\SepA^\prime}{\AlgA}(\ValC)}{\AlgInterp{\SepA^\prime}{\AlgA}(\ValD)}{\DesSetA'}$.
Therefore, every pair of truth-values of $\AlgA$ is separable
in $\MatA \MatBProdFn \MatA'$, and so the latter is sufficiently expressive.
By the procedure in~\cite{greati2021}, $\MatA \MatBProdFn \MatA'$
is axiomatizable by an analytic \SetTSetT{} H-system
that is finite if $\AlgA$ is finite.
\end{proof}

\end{appendix}

\end{document}